\newif\ifdiagbox@cellEmpty@
  \def\diagbox@text{#1}}
    \def\diagbox@align{#1}%
\xpatchcmd{\diagbox@double}{%
  \setkeys{diagbox}{dir=NW,#1}%
}{%
  \if\relax\detokenize{#2}\relax
    \if\relax\detokenize{#3}\relax
      \diagbox@cellEmpty@true
      \setkeys{diagbox}{highest=1\line, align=l, text=\@empty}%
    \fi
  \fi
  \setkeys{diagbox}{dir=NW, #1}%
  \ifdiagbox@cellEmpty@
    \rlap{\makebox
      [\dimexpr\diagbox@wd-\diagbox@insepl-\diagbox@insepr\relax]%
      [\diagbox@align]%
      {\diagbox@text}}%
  \fi
}{}{\ddt}
  \newcommand\figcaption{\def\@captype{figure}\caption}
  \newcommand\tabcaption{\def\@captype{table}\caption}
\newtheorem{theorem}{Theorem}
\newtheorem{lem}{Lemma}
\newtheorem{definition}{Definition}
\newcommand{\calD}{\mathcal{D}}
\newcommand{\calC}{\mathcal{C}}
\newcommand{\calI}{\mathcal{I}}
\newcommand{\EE}{\mathbb{E}}
\newcommand{\vpi}{\vec{\pi}}
\newcommand{\gray}[1]{{\color{gray}#1}}
\newcommand{\XY}[1]{{\color{black}{#1}}} 
\begin{document}

\title{Disentangling data distribution for Federated Learning}

\author{
    Xinyuan Zhao\textsuperscript{1}, 
    Hanlin Gu\textsuperscript{2}, 
    Lixin Fan\textsuperscript{2}, 
    Yuxing Han\textsuperscript{1}, 
    Qiang Yang\textsuperscript{2,3} \\
    \textsuperscript{1}Shenzhen International Graduate School, THU \\
    \textsuperscript{2}AI Lab, Webank \\
    \textsuperscript{3}Department of Computer Science and Engineering, HKUST \\
    \texttt{zhao-xy23@mails.tsinghua.edu.cn}, \\
    \texttt{allengu@webank.com},
    \texttt{yuxinghan@sz.tsinghua.edu.cn}, \\
    \texttt{qyang@cse.ust.hk}
}

\maketitle

\begin{abstract}
Federated Learning (FL) facilitates collaborative training of a global model whose performance is boosted by private data owned by distributed clients, without compromising data privacy. Yet the wide applicability of FL is hindered by \XY{the} entanglement of data distributions across different clients. 
This paper demonstrates for the first time that by disentangling data distributions FL can in principle achieve efficiencies comparable to those of distributed systems, requiring only one round of communication. 
To this end, we propose a novel FedDistr algorithm, which employs stable diffusion models to decouple and recover data distributions. Empirical results on the CIFAR100 and DomainNet datasets show that FedDistr significantly enhances model utility and efficiency in both disentangled and near-disentangled scenarios while ensuring privacy, outperforming traditional federated learning methods.

\end{abstract}

\maketitle

\section{Introduction}






\begin{wrapfigure}{r}{7.5cm} 	
\vspace{-3pt}
\hspace{-2cm}
	\centering
      	\begin{subfigure}{0.25\textwidth}
  		 	\includegraphics[width=1\textwidth]{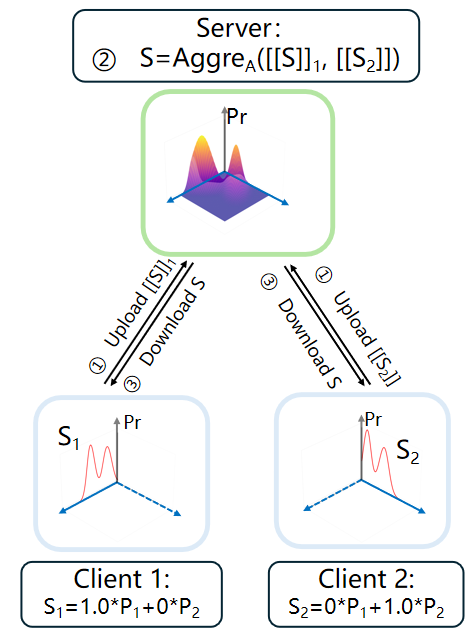}
      \subcaption{{Disentangled}}
    		\end{subfigure}
    	\begin{subfigure}{0.265\textwidth}
  		 	\includegraphics[width=1\textwidth]{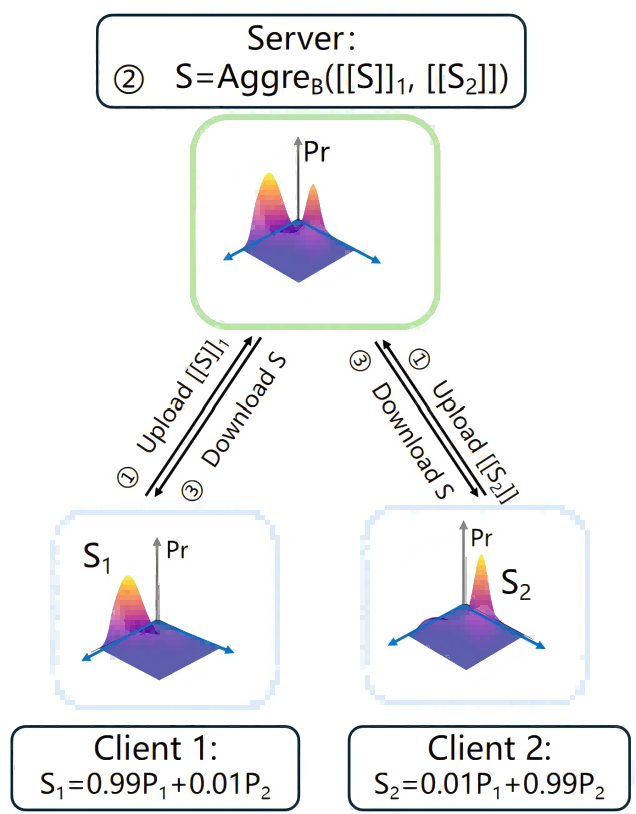}
            \subcaption{{Near-Disentangled}}
    		\end{subfigure}
\hspace{-2cm}
\vspace{-0.8em}
 \caption{\scriptsize
Disentangled and near-disentangled cases: In the Disentangled case, two clients have data distributions on two disentangled base distribution, $P_1$ and $P_2$, separately. In the $\xi$-entangled case, each client has data distributions across both disentangled base distribution $P_1$ and \XY{base distribution} $P_2$, but with one base distribution dominating the other. In both case, client $k$ ($k=1,2$) communicates with the server through data distribution in a single round (upload the distribution $[[S_k]]$ that applying the privacy preserving mechanishm on $S_k$). The server employs different aggregation strategies for the two scenarios: $\text{Aggre}_A$ and $\text{Aggre}_B$ for disentangled and near-disentangled data distributions respectively.}  
\label{fig:intro}\vspace{-4pt}
\end{wrapfigure}

Despite the extensive research in Federated Learning (FL), its practical application remains limited. A key challenge is to achieve high efficiency while preserving both model utility and privacy \citep{mcmahan2017communication, kairouz2021advances}. 
There is a \XY{``consensus"} that this inefficiency stems from the \textit{disentanglement} of data distribution across clients, where many rounds of communications are required to ensure the convergence of the global model \citep{zhao2018federated,liconvergence,tian2024edge}. 
\XY{In contrast, tasks with tight disentanglement exhibit greater efficiency in distributed systems \cite{coulouris2005distributed}.
This is because complex tasks can be broken down into disentangled subtasks, which can be assigned to multiple clients for parallel execution. For example, a medical imaging diagnosis task for HIV and Alzheimer's disease can be disentangled into two distinct sub-tasks. These tasks are assigned to two hospitals, each training its respective model parallelly and unified to the one global model.}




We believe that the ideal federated learning algorithm can achieve efficiency levels comparable to those of ideal distributed systems that attain full parallelism \citep{sunderam1990pvm}, provided that the data distributions across clients can be entirely \textit{disentangled}, as illustrated in  Def. \ref{def:ep-en}. As shown by Theorem \ref{thm:thm1}, it is actually a \textit{sufficient condition} for achieving ideal efficiency of federated learning with only one round of communication being needed. Moreover, this ideal condition is often approximately fulfilled in practice e.g., when millions of mobile device clients participate in federated learning and most client data distributions are different from each other \citep{sattler2019robust,tian2024edge}.
In other words, under such a near-disentangled condition (see Def. \ref{def:ep-en}), there exists a federated learning algorithm capable of achieving global model utility with only one communication round, while ensuring that the utility loss remains within a tolerable threshold (see Theorem \ref{thm:near}).

In order to take full advantage of the described disentangled and near–disentangled cases (see Fig. \ref{fig:intro}), we propose to first disentangle data distributions into distinct components such that each client can launch their respective learning task independently. This disentanglement will lead to significant reduction of required communication rounds as revealed by Theorem \ref{thm:thm1}.  Technology-wise, there is a large variety of methods that can be utilized to decompose data distribution, ranging from \textit{subspace decomposition} \citep{abdi2010principal,von2007tutorial} to \textit{dictionary learning} \citep{tovsic2011dictionary} etc. (see Section \ref{sec:related-dis} for detailed review). 
In this work, we propose an algorithm, called \textit{FedDistr}, which leverages the stable diffusion model technique \citep{croitoru2023diffusion} due to its robust ability to extract and generate data distributions effectively. In our approach, clients locally extract data distributions via \XY{a} diffusion model, and then upload these decoupled distributions to the server. The server actively identifies the orthogonal or parallel between the base distributions uploaded by clients and aggregates the orthogonal distribution once. 



Previous work \citep{zhang2022no} has shown that it is impossible to achieve the optimal results among the utility, privacy and efficiency. The proposed FedDistr offers a superior balance between model utility and efficiency under the disentangled and near-disentangled scenarios, while still ensuring the privacy-preserving federated learning: (1) By disentangling client data distributions into the different base distributions, the server actively aligns the base distribution across different client, while FedAvg does not perform this disentangling, leading to a decline in global performance, especially for the disentangled case; (2) The proposed FedDistr requires only one round of communication, and the amount of transmitted distribution parameters is much smaller than that of model gradients; (3) The proposed FedDistr transmits a minimal amount of data distribution parameters, thereby mitigating the risk of individual data privacy leakage to some extent. Furthermore, privacy mechanisms such as differential privacy (DP) can be integrated into FedDistr, offering additional protection for data privacy.

\section{Related Work}\label{sec:related}

\subsection{Communication Efficient Federated Learning}
One of the earliest approaches to reducing communication overhead is the FedAvg algorithm \citep{mcmahan2017communication}. FedAvg enables multiple local updates at each client before averaging the models at the server, thereby significantly decreasing the communication frequency. To further mitigate communication costs, compression techniques such as quantization and sparsification \citep{reisizadeh2020fedpaq} and client selection strategies \citep{lian2017can,liu2023fedcs} have been implemented in federated learning. However, all of these methods are less effective in Non-IID scenarios, where the model accuracy is significantly impacted.

To address the Non-IID problem, numerous methods have been proposed \citep{li2020federated, arivazhagan2019federated}, which introduce constraints on local training to ensure that models trained on heterogeneous data do not diverge excessively from the global model. However, most of these methods require numerous communication rounds, which is impractical, particularly in wide-area network settings.

Furthermore, some methods propose a one-shot federated learning approach \citep{guha2019one,li2020practical}, which requires only a single communication round by leveraging techniques such as knowledge distillation or consistent voting. However, significant challenges arise in this context under Non-IID settings \citep{diao2023towards}.

\subsection{Distribution Generation} \label{sec:related-dis}
This paper focuses on transferring distribution instead of models. There are several distribution generation methods:
\textbf{Parametric Methods} \citep{reynolds2009gaussian} assume that the data follows a specific distribution with parameters that can be estimated from the data. \textit{Gaussian Mixture Model (GMM)} represents a distribution as a weighted sum of multiple Gaussian components.
\noindent\textbf{Generative Models} \citep{goodfellow2020generative, croitoru2023diffusion}
aim to learn the underlying distribution of the data so that new samples can be generated from the learned distribution. For example, GANs \cite{goodfellow2020generative,karras2020analyzing} consist of two networks, a generator $G$ and a discriminator $D$.
The generator learns to produce data similar to the training data, while the discriminator tries to distinguish between real and generated data. \noindent\textbf{Principal Component Analysis (PCA)} \citep{abdi2010principal} is a linear method for reducing dimensionality and capturing the directions of maximum variance. It decomposes the data covariance matrix $\Sigma$ into eigenvectors and eigenvalues. \noindent\textbf{Dictionary Learning and Sparse Coding} \citep{tovsic2011dictionary} seeks to represent data as a sparse linear combination of basis vectors (dictionary atoms).


\begin{table*}[!htp]
\footnotesize
  \centering
  \setlength{\belowcaptionskip}{15pt}
  \caption{Table of Notation}
  \label{table: notation}
    \begin{tabular}{cc}
    \toprule
    Notation & Meaning\cr
    \midrule\
    $\calD_k$  & Dataset of Client $k$ \cr
    $K$ & the number of clients \cr
    $\calD$ & $\calD_1 \cup \cdot \cup \calD_K$ \cr
    $S_k$ and $S$ & the distribution of $\calD_k$ and $\calD$ \cr
     $\epsilon_u$ & Utility loss in Eq. \eqref{eq:utility} \cr    
    $F_\omega$ & the federated model \cr
    $\{P_i\}_{i=1}^m$ & $m$ independent  sub-distributions \cr
    $\vpi_k = (\pi_{1,k}, \cdots, \pi_{m,k})$ &  the probability vector for client $k$ on $\{P_i\}_{i=1}^m$ \cr
    $\xi$ & entangled coefficient \cr
    $s_{k_1, k_2}$ & Entangled coefficient between client $k_1$ and $k_2$ \cr
    $p$ & learnable prompt embedding \cr
    $T_c$ & communication rounds \cr

   
    \bottomrule
    \end{tabular}
\end{table*}
\section{Framework} \label{sec:framework}
In this section, we first formulate the problem via distribution transferring and then provide the analysis on what conditions clients can directly communicate once through a distribution entanglement perspective.
\subsection{Problem Formulation}
Consider a Horizontal Federated Learning (HFL) consisting of $K$ clients who collaboratively train a HFL model $\omega$ to optimize the following objective:
\begin{equation} \label{eq:objective-FL}
    \min_{\omega} \sum_{k=1}^K\sum_{i=1}^{n_k}\frac{f(\omega; (x_{k,i}, y_{k,i}))}{n_1+\cdots+n_K},
\end{equation}
where $f$ is the loss, e.g., the cross-entropy loss, $\calD_k=\{(x_{k,i}, y_{k,i})\}_{i=1}^{n_k}$ is the dataset with size $n_k$ owned by client $k$. Denote $\calD = \calD_1 \cup \cdots \cup \calD_K$ and $\calD$ follows the distribution $S$. The goal of the federated learning is to approach the centralized training with data $\calD$ as:
\begin{equation}\label{eq:objective-cen}
    \min_\omega \mathbb{E}_{(x,y) \in S} f(\omega; (x,y)).
\end{equation}

Numerous studies \citep{zhao2018federated,liconvergence,tian2024edge} have demonstrated that Eq. \eqref{eq:objective-FL} converges to Eq. \eqref{eq:objective-cen} under IID settings. However, significant discrepancies arise in Non-IID scenarios, leading to an increased number of communication rounds and a decline in model utility for Eq. \eqref{eq:objective-FL}. While data heterogeneity in extreme Non-IID settings was considered the main cause for FL algorithms having to compromise model performances for reduced rounds of communication, this paper discloses that
direct optimization of Eq. \eqref{eq:objective-cen} can actually be achieved at the cost of a single round of communication. Both theoretical analysis and empirical studies show that  data heterogeneity is a blessing rather than a curse, as long as data distributions among different clients can be completely disentangled (see Theorem \ref{thm:thm1}). 
Moreover, we aim to learn the distribution $ S $ in a single communication by transferring either the distribution or its parameters, thereby mitigating the risk of individual data privacy leakage \citep{xiao2023pac} as follows:


\begin{equation}\label{eq:objective-gen}
    \min_{S} \sum_{k=1}^K \lambda_k \Tilde{f}(S; \calD_k), \quad  \text{s.t. }  T_c =1.
\end{equation}

where $ \Tilde{f} $ represents the loss function used to evaluate whether $ S $ accurately describes the dataset $ \calD_k $ \footnote{For various distribution estimation methods, refer to Section \ref{sec:related-dis}}, $T_c$ is the communication round, $\lambda_k $ denotes the coefficient for client $ k $, and it is required that $ \sum_{k=1}^K \lambda_k = 1 $.

In this framework, all clients first 
collaborate to estimate the distribution $ S $ of the total dataset $ \mathcal{D} = \mathcal{D}_1 \cup \cdots \cup \mathcal{D}_K $.
The following section discusses the conditions under which clients can directly interact once to estimate $ S $ through a distribution entangled perspective.

We also define the utility loss for the estimated distribution $\hat S$ as:
\begin{equation}\label{eq:utility}
    \epsilon_u = \mathbb{E}_{(x,y) \in S} f(\hat \omega^*; (x,y))-\mathbb{E}_{(x,y) \in S} f(\omega^*; (x,y)),
\end{equation}
where $\omega^*$ is the optimal parameter of Eq. \eqref{eq:objective-cen} and $\hat \omega^* = argmin_\omega \mathbb{E}_{(x,y) \in \hat S} f(\omega^*; (x,y))$.

\noindent\textbf{Threat Model.} We assume the server might be \textit{semi-honest} adversaries such that they do not submit any malformed messages but may launch \textit{privacy attacks} on exchanged information from other clients to infer clients' private data.

\subsection{Distribution Entanglement Analysis}
\subsubsection{$\xi$-entangled}
\begin{wrapfigure}{r}{7.5cm} 
\vspace{-3pt}
\hspace{-2cm}
	\centering
      	\begin{subfigure}{0.28\textwidth}
  		 	\includegraphics[width=1\textwidth]{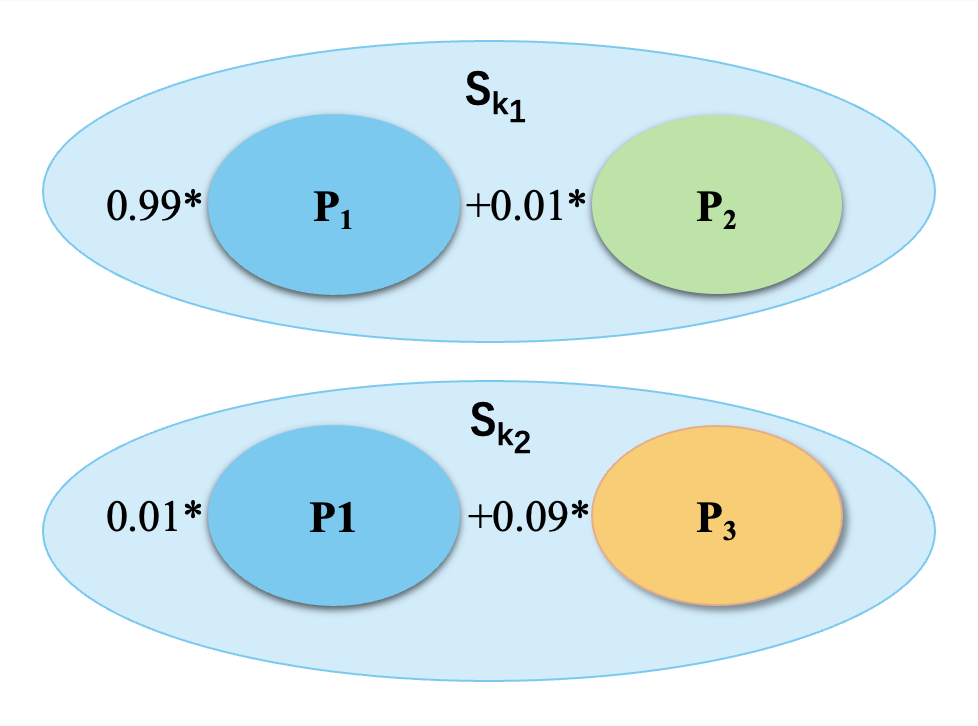}
      \subcaption{{$\xi =0.02$}}
    		\end{subfigure}
    	\begin{subfigure}{0.28\textwidth}
  		 	\includegraphics[width=1\textwidth]{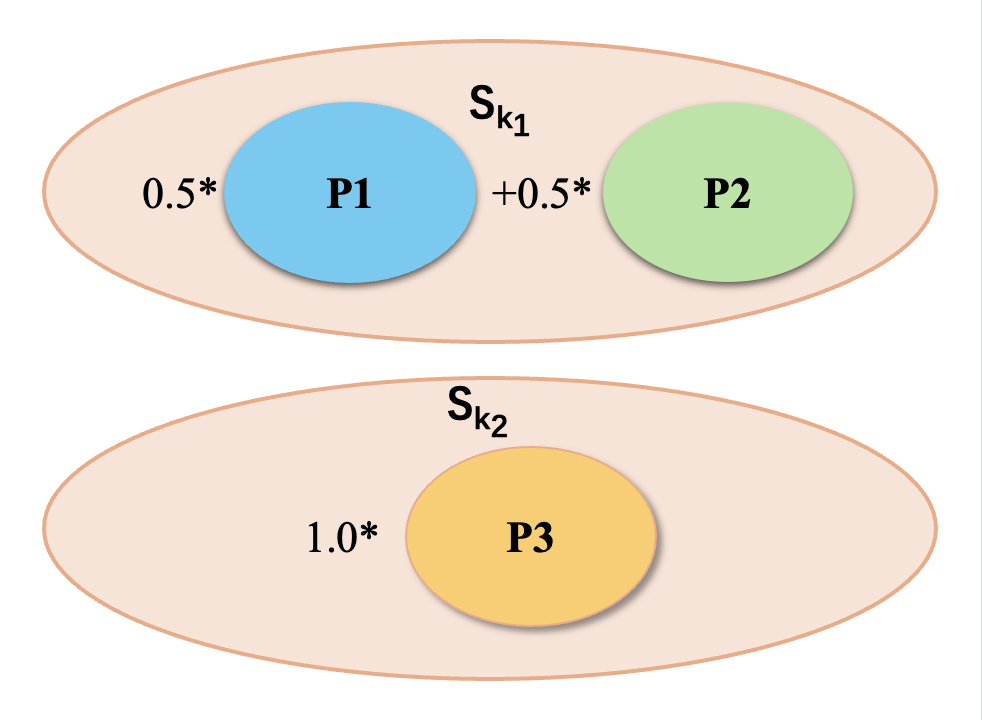}
            \subcaption{{$\xi = 0$ (Disentangled)}}
    		\end{subfigure}
\hspace{-2cm}
 \caption{Two cases of $\xi$-entangled: $\xi>0$ (left) and  $\xi=0$ (right). } 
	\label{fig:enta}
 \vspace{-4pt}
\end{wrapfigure}
Assume the data $\calD$ is drawn from $m$ distinct base distributions. Therefore, we can decompose the complex distribution $S$ into simpler components, each representing a portion of the overall distribution \citep{reynolds2009gaussian,abdi2010principal}:
\begin{equation}\label{eq:decouple-distribution}
    S = \sum_{i=1}^{m} \pi_i P_i.
\end{equation}
where $P_i$  represents the $ i $-th base distribution (e.g., DomaiNet has the data with different label and domain, the base distribution represents the distribution followed by the data with a single label and domain),
 $ \pi_i $ is the weight or probability assigned to the $ i $-th base distribution, such that $ \sum_{i=1}^{m} \pi_i = 1 $ and $ \pi_i > 0 $. In the case of Gaussian Mixture Models (GMMs) \citep{reynolds2009gaussian}, $P_i$ is the Gaussian distribution  $\mathcal{N}(X|\mu_i, \Sigma_i) $.  For each client’s dataset $\calD_k$ following a distribution $S_k$ can be represented as:
\begin{equation} \label{eq: client-dis}
    S_k = \sum_{i =1}^m\pi_{i,k}P_{i},
\end{equation}
where $ \pi_{i,k} $ is the weight or probability assigned to the $ i $-th base distribution, such that  $\sum_{i \in \calC_k}\pi_{i,k} =1$ and $\pi_{i,k}\geq 0$. Noted that $\pi_{i,k}=0$ if client $k$ doesn't have the distribution $S_i$. 
We define the Entangled coefficient between two clients based on the probability $\pi_{i,k}$:

\begin{definition} \label{def:2} Let $\vpi_k = (\pi_{1,k}, \cdots, \pi_{m,k})$ denote the probability vector for client $k$.
The Entangled coefficient $s_{k_1,k_2}$ between client $k_1$ and $k_2$ is defined as:
\begin{equation}
    s_{k_1,k_2} = \frac{<\vec \pi_{k_1}, \vec \pi_{k_2}>}{\|\vec \pi_{k_1}\|_2\|\vec \pi_{k_2}\|_2},
\end{equation}
where $<a,b>$ represents the inner product of two vectors and $\|\cdot\|_2$ is the $\ell_2$ norm.
\end{definition}
Definition \ref{def:2} quantifies the overlap in data distribution between clients by identifying the difference between probability on the base distribution. Obviously, $0 \leq s_{k_1,k_2} \leq 1$. We define two distributions to be orthogonal if $s_{k_1,k_2}=0$.

According to the Entangled coefficient, we 
define $\xi$-entangled across clients' data distribution in federated learning according to the entangled coefficient.
\begin{definition} \label{def:ep-en}
We define the distributions across $ K $ clients as \textbf{$\xi$-entangled} if the Entangled coefficient $ s_{k_1,k_2} \leq \xi $ for any two distinct clients $ k_1, k_2 \in [K] $. Moreover, when $\xi =0$, we define the distribution across $ K $ clients to be \textbf{disentangled}.
\end{definition}
It is important to note that Entangled coefficient is closely related to the degree of Non-IID data in federated learning (FL) \citep{liconvergence}. Specifically, under IID conditions, where each client follows the same independent and identical distribution, this implies that $\pi_{i,k_1} = \pi_{i, k_2} >0$ for any $k_1, k_2$ in Eq. \eqref{eq: client-dis}. Consequently, $\xi$ equals $1$. In contrast, under extreme Non-IID conditions, where each client has entirely distinct data distributions (e.g., client $k_1$ possesses a dataset consisting solely of `cat' images, whereas client $k_2$ possesses a dataset exclusively of `dog' images), the Entanglement Coefficient, $\xi$, equates to zero.

\subsubsection{Analysis on Disentangled and Near-disentangled case}

Numerous studies, as referenced in the literature \citep{kairouz2021advances, liconvergence}, have demonstrated that the performance of the Federated Averaging (FedAvg) algorithm \citep{mcmahan2017communication} is significantly influenced under disentangled conditions. It will be demonstrated herein that there exists an algorithm capable of preserving model utility and efficiency within the disentangled scenario, with the distinct advantage of necessitating merely a single communication round.


\begin{theorem} \label{thm:thm1}
If $f$ is $L$-lipschitz, a data distribution across clients being disentangled is a sufficient condition for the existence of a privacy-preserving federated algorithm that requires only a single communication round and achieves a utility loss of less than $\epsilon$ with a probability of at least  $1- \exp(-\frac{\min\{n_1, \cdots, n_K\}\epsilon^2)}{2L^2}$, i.e.,
\begin{equation}
    Pr(\epsilon_u \leq \epsilon) \geq 1- \exp(-\frac{\min\{n_1, \cdots, n_K\}\epsilon^2)}{2L^2}.
\end{equation}
\end{theorem}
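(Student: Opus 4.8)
The plan is to \emph{exhibit} the witnessing one-round protocol and then bound its utility loss $\epsilon_u$ by a concentration argument that decouples across clients precisely because the distributions are disentangled. For the protocol: each client $k$ forms a local estimate $\hat S_k$ of its own distribution $S_k$ from $\calD_k$ (in the idealized analysis one may take $\hat S_k$ to be the empirical measure of $\calD_k$, or the output of a consistent distribution learner such as the diffusion model used by FedDistr), passes it through the privacy mechanism, and in a \emph{single} round uploads $[[\hat S_k]]$ together with $n_k$. Writing $N=n_1+\cdots+n_K$, the server sets $\hat S=\sum_{k=1}^K\frac{n_k}{N}\hat S_k$ and returns $\hat\omega^{\ast}=\arg\min_\omega\EE_{(x,y)\sim\hat S}f(\omega;(x,y))$; this uses $T_c=1$ and never exposes raw samples, hence is privacy-preserving under the stated threat model.

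Next I would use disentanglement to argue the aggregation is faithful with an error that splits across clients. Since $s_{k_1,k_2}=0$ and each $\pi_{i,k}\ge 0$, the support sets $\calC_k$ are pairwise disjoint, so every base distribution $P_i$ is owned by a unique client; hence $\hat S=\sum_k\frac{n_k}{N}\hat S_k$ estimates $S=\sum_k\frac{n_k}{N}S_k$ with no overlap to reconcile (this is exactly where entanglement would hurt: a shared $P_i$ would force the server to fuse conflicting partial estimates, which cannot be done consistently in one round). I then decompose $\epsilon_u=\big(\EE_S f(\hat\omega^{\ast})-\EE_{\hat S}f(\hat\omega^{\ast})\big)+\big(\EE_{\hat S}f(\hat\omega^{\ast})-\EE_{\hat S}f(\omega^{\ast})\big)+\big(\EE_{\hat S}f(\omega^{\ast})-\EE_S f(\omega^{\ast})\big)$, where the middle bracket is $\le 0$ by optimality of $\hat\omega^{\ast}$ for $\EE_{\hat S}f$. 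Because $f(\omega;\cdot)$ is $L$-Lipschitz in the data argument, Kantorovich--Rubinstein duality gives $|\EE_S f(\omega;\cdot)-\EE_{\hat S}f(\omega;\cdot)|\le L\,W_1(S,\hat S)$ \emph{uniformly in $\omega$} — which is what lets us bypass a uniform-convergence argument over the data-dependent $\hat\omega^{\ast}$ — so $\epsilon_u\le 2L\,W_1(S,\hat S)\le 2L\sum_k\frac{n_k}{N}W_1(S_k,\hat S_k)$ by joint convexity of $W_1$.

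It then remains to control each $W_1(S_k,\hat S_k)$. I would view it as a function of the $n_k$ i.i.d. draws constituting $\calD_k$, with bounded-difference sensitivity $O(1/n_k)$; McDiarmid's inequality (equivalently, Hoeffding on the underlying averages after normalizing the data domain) yields a sub-Gaussian tail of the form $\exp(-\Theta(n_k t^2/L^2))$ around $\EE[W_1(S_k,\hat S_k)]$, and for the idealized estimator this mean/bias term is taken to vanish. Choosing the per-client deviations so that the weighted sum is at most $\epsilon/(2L)$ and applying a union bound, the failure event is dominated by the worst client, giving $Pr(\epsilon_u\le\epsilon)\ge 1-\exp\!\big(-\min\{n_1,\dots,n_K\}\,\epsilon^2/(2L^2)\big)$.

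The main obstacle is twofold. First, one must pin down the sense in which $\hat S_k$ \emph{recovers} $S_k$ so that the bias term above is genuinely zero — this is where the idealized diffusion assumption enters, and a non-idealized treatment would add an approximation error that the theorem suppresses. Second, extracting the exponent in its precise form $\epsilon^2/(2L^2)$ takes some care with constants: a clean normalization of the data space (so that $L$-Lipschitzness forces range $\le 2L$ for $f(\omega;\cdot)$) together with either a mildly lossy union bound or, as an alternative to the $W_1$ step, a direct Hoeffding bound on $\EE_{\hat S_k}g-\EE_{S_k}g$ for a fixed bounded statistic $g$ combined with a covering of the parameter space. Everything else is routine bookkeeping.
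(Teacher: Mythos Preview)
Your high-level strategy matches the paper's exactly: exhibit the one-round ``each client estimates $S_k$ locally, server mixes'' protocol, use the three-term decomposition of $\epsilon_u$ with the middle term nonpositive by optimality of $\hat\omega^{\ast}$, bound the two remaining terms via the $L$-Lipschitz assumption, and let the worst client govern the tail. The difference is in how the per-client deviation is controlled. The paper does \emph{not} go through $W_1$ and Kantorovich--Rubinstein; instead it simplifies by assuming each $S_k$ is a truncated normal on $[a,b]$ with only the location parameter $\mu_k$ unknown, estimates $\mu_k$ by the sample mean, and applies Hoeffding directly to $\bar X_k-\mu_k$. Because $\hat S_k$ is then a pure translate of $S_k$, the Lipschitz bound $|\EE_{S_k}f(\omega)-\EE_{\hat S_k}f(\omega)|\le L|\hat\mu_k-\mu_k|$ is immediate and the constant $\epsilon^2/(2L^2)$ drops out cleanly with no bias term to absorb. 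Your $W_1$/McDiarmid route is more general and conceptually cleaner (it makes explicit \emph{why} Lipschitzness is the right hypothesis), but as you correctly flag, the mean $\EE[W_1(S_k,\hat S_k)]$ does not vanish for the empirical measure, so recovering the exact exponent requires either the idealized-learner assumption you invoke or a parametric reduction of the kind the paper uses. In short: same skeleton, different flesh; the paper trades generality for a model in which the constants are automatic.
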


Theorem \ref{thm:thm1} demonstrates that if the data distribution across clients is disentangled, it is possible to achieve an expected loss of $\epsilon$ with a probability of $1- \exp(-\frac{\min\{n_1, \cdots, n_K\}\epsilon^2)}{2L^2}$
with only a single communication round. It is noteworthy that as the number of datasets \(n_k\) or the expected loss tends to infinity, this probability approaches one. The implication of Theorem \ref{thm:thm1} is that when the distributions of individual clients are disentangled, clients can effectively transfer their distributions to the server. Subsequently, the server can aggregate the disentangled distributions uploaded by the clients in a single step to obtain the overall dataset distribution \(\mathcal{D}\). Additionally, we provide an analysis of the small \(\xi\) condition, referred to as near-disentangled.

\begin{theorem} \label{thm:near}
When the distributions across $K$ clients satisfy near-disentangled condition, specifically, $\xi < \frac{1}{(K-1)^2}$, if $f$ is $L$-lipschitz, then there exists a privacy-preserving federated algorithm that requires only one communication round and achieves a utility loss of less than $\epsilon$  with a probability of at least  $1- \exp(-\frac{(1-(K-1)\sqrt{\xi})n\epsilon^2}{2mL^2})$, i.e., 
    \begin{equation}
    Pr(\epsilon_u \leq \epsilon) \geq 1-\exp(-\frac{(1-(K-1)\sqrt{\xi})n\epsilon^2}{2mL^2}).
\end{equation}

\end{theorem}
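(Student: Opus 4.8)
The plan is to build on Theorem \ref{thm:thm1} by quantifying how much the near-disentangled condition degrades the clean disentangled argument. In the disentangled case ($\xi = 0$), the supports $\calC_k$ of the probability vectors $\vpi_k$ are pairwise disjoint, so each base distribution $P_i$ is "owned" by exactly one client; that client holds $n_k$ i.i.d. samples from $P_i$, so the empirical estimate $\hat P_i$ concentrates around $P_i$, and a McDiarmid/Hoeffding-type bound together with $L$-Lipschitzness of $f$ gives the concentration in Theorem \ref{thm:thm1}. When $\xi > 0$ the supports overlap, so a given base distribution $P_i$ may be shared among several clients, and the number of "clean" samples available to pin down $P_i$ is reduced. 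First I would make precise the combinatorial consequence of $\xi < \frac{1}{(K-1)^2}$: since $s_{k_1,k_2} = \langle \vpi_{k_1}, \vpi_{k_2}\rangle / (\|\vpi_{k_1}\|_2\|\vpi_{k_2}\|_2) \le \xi$ for all pairs, the Gram matrix of the normalized vectors is strictly diagonally dominant (the off-diagonal row sum is at most $(K-1)\xi < 1/(K-1) \le 1$), hence positive definite; this is the standard "almost-orthogonal systems are linearly independent" fact and it is what lets the server linearly unmix the reported mixtures $S_k = \sum_i \pi_{i,k} P_i$ into estimates of the individual $P_i$.

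Next I would track the error amplification through this unmixing step. The server receives (privatized) estimates $\hat S_k$ of $S_k$; writing $\Pi$ for the $K \times m$ matrix of weights, recovering the base distributions amounts to inverting (a submatrix of) $\Pi$, and the conditioning of that inversion is governed by the smallest singular value of the Gram matrix, which by the diagonal-dominance estimate is at least $1 - (K-1)\sqrt{\xi}$ (using $s_{k_1,k_2}\le\xi$ and Cauchy–Schwarz to pass from $\xi$ to $\sqrt{\xi}$ in the entrywise bounds on the normalized Gram matrix). This is exactly the factor $1 - (K-1)\sqrt{\xi}$ that appears inside the exponent. I would then redo the concentration bound from Theorem \ref{thm:thm1} with two modifications: (i) the effective sample size per base distribution is $n/m$ rather than $\min_k n_k$, since with $m$ base distributions and total sample budget $n = n_1 + \cdots + n_K$ one can guarantee on the order of $n/m$ samples informative about each $P_i$; and (ii) the deviation that must be controlled is inflated by the unmixing conditioning factor, so the variance proxy in Hoeffding's inequality scales like $L^2 m / \big((1-(K-1)\sqrt{\xi})\, n\big)$, yielding $\Pr(\epsilon_u \le \epsilon) \ge 1 - \exp\!\big(-\frac{(1-(K-1)\sqrt{\xi})\, n\, \epsilon^2}{2 m L^2}\big)$.

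Finally I would assemble the pieces: invoke $L$-Lipschitzness to convert the distributional estimation error $\|\hat S - S\|$ (in whichever metric, e.g. total variation or Wasserstein, the earlier analysis uses) into the utility gap $\epsilon_u$ defined in Eq. \eqref{eq:utility}, exactly as in the proof of Theorem \ref{thm:thm1}, and check that the privacy-preserving mechanism applied to $S_k$ before upload contributes only lower-order noise so the single-round, privacy-preserving claim is retained. The main obstacle I anticipate is step two — getting a clean, tight bound on how the near-orthogonal unmixing propagates the per-client estimation error into the per-base-distribution error, and in particular justifying that the relevant conditioning constant is precisely $1-(K-1)\sqrt{\xi}$ rather than some weaker polynomial in $\xi$; this requires carefully choosing the matrix norm in which to measure $\Pi$'s inverse and being honest about the $\xi \to \sqrt{\xi}$ passage forced by normalizing the $\vpi_k$. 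A secondary subtlety is bookkeeping the $n/m$ effective-sample-size claim when the $n_k$ are unequal and the supports $\calC_k$ overlap unevenly; I would handle this by a worst-case allocation argument, possibly at the cost of absorbing constants.
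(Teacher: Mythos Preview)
Your proposal takes a genuinely different route from the paper's proof. The paper does \emph{not} unmix the reported mixtures by inverting (a submatrix of) $\Pi$ and then track error amplification through the conditioning of that inverse. Instead it proves a short combinatorial lemma: under the simplifying normalization that each $\vpi_k$ is a unit vector with $\sum_{k}\pi_{i,k}=1$ for every $i$, the near-orthogonality $\langle\vpi_{k_1},\vpi_{k_2}\rangle\le\xi$ forces $\pi_{i,k_1}\pi_{i,k_2}\le\xi$ entrywise, hence $\min(\pi_{i,k_1},\pi_{i,k_2})\le\sqrt{\xi}$, and therefore for each base distribution $P_i$ some single client owns a fraction at least $1-(K-1)\sqrt{\xi}$ of its mass. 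With $n/m$ samples per base distribution, that dominant client alone has at least $(1-(K-1)\sqrt{\xi})\,n/m$ samples from $P_i$, and one simply reruns the Hoeffding-based Lemma~\ref{lem:app1} from Theorem~\ref{thm:thm1} with this effective sample size. No matrix inversion, no singular-value bound, no unmixing.

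Two remarks on your route. First, the factor $1-(K-1)\sqrt{\xi}$ does not arise the way you suggest: the Gram matrix of the \emph{normalized} $\vpi_k$ has off-diagonal entries exactly $s_{k_1,k_2}\le\xi$, so Gershgorin (or diagonal dominance) gives smallest eigenvalue $\ge 1-(K-1)\xi$, not $1-(K-1)\sqrt{\xi}$; your ``Cauchy--Schwarz to pass from $\xi$ to $\sqrt{\xi}$'' step is not where the square root lives. In the paper the $\sqrt{\xi}$ comes from the entrywise product bound above. Second, your unmixing argument implicitly requires the server to know $\Pi$ (or a well-conditioned submatrix of it) and to solve a linear system in the space of distributions, both of which introduce complications the paper's ``pick the dominant client per base distribution'' argument sidesteps entirely. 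Your approach could plausibly be made to work and might even yield a sharper constant, but it is considerably heavier than what the paper actually does.
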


Theorem \ref{thm:near} demonstrates that, within the framework of the near-disentangled scenario, the probability of achieving an expected loss of $\epsilon$ is given by 
$
1- \exp(-\frac{(1-(K-1)\sqrt{\xi})n\epsilon^2}{2mL^2})$.
Specifically, as the entanglement coefficient $\xi$ increases or data number $n$ decreases, the probability decreases (see proof in Appendix).

\section{The Proposed Algorithm}
This section introduces the proposed algorithm, called FedDistr, achieving one communication round while maintaining the model utility by leveraging the Latent Diffusion Model \XY{(LDM, \citeauthor{ho2022classifier}, \citeyear{ho2022classifier})} due to its fast inference speed and the wide availability of open-source model parameters. FedDistr is mainly divided into the following \XY{four} steps (see Fig. \ref{fig:algo} and Algo. \ref{alg:aof-hfl}):
\begin{figure}[!htbp]
    \centering
      \includegraphics[width=\linewidth]{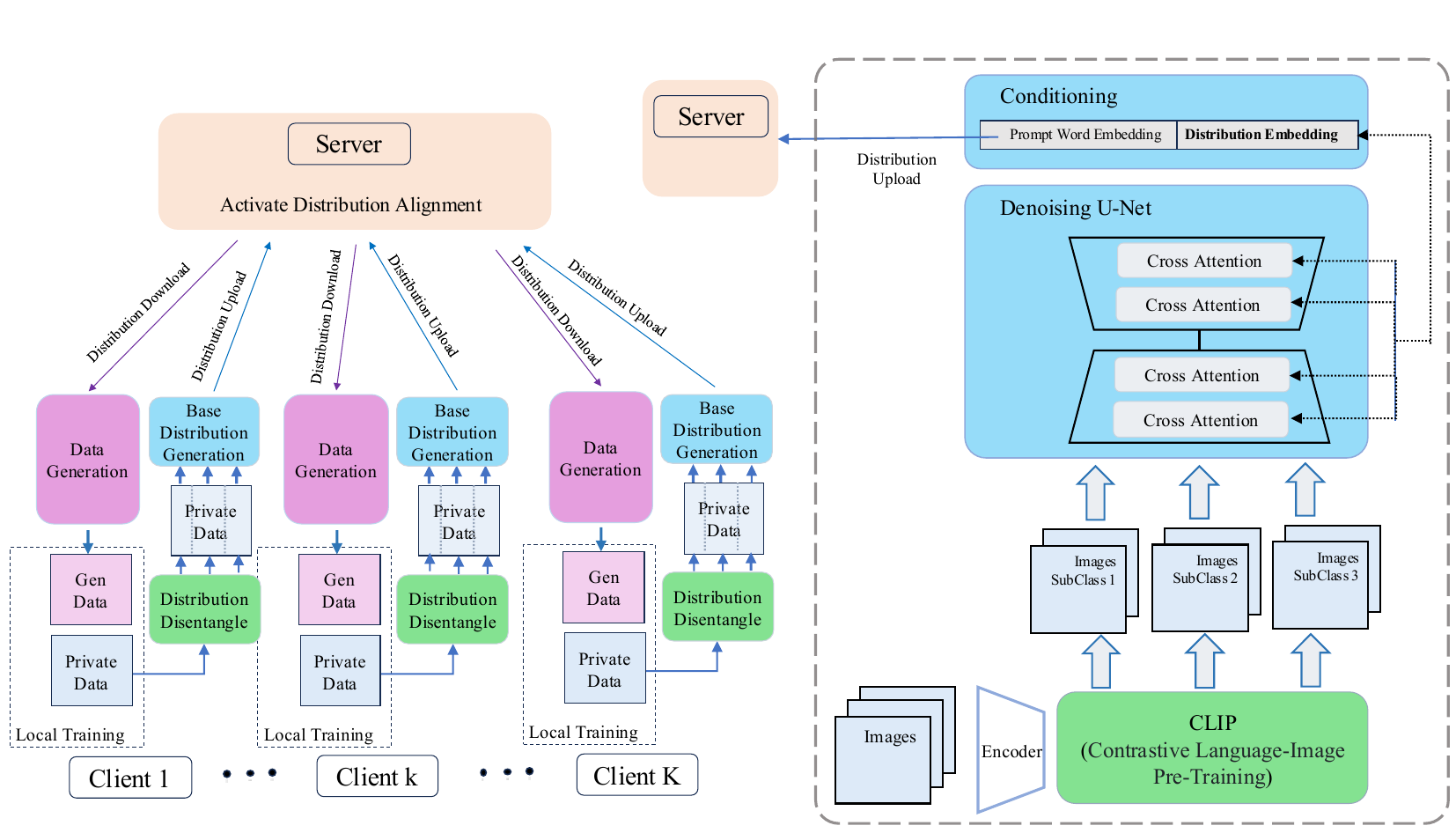}
\caption{Overview of the proposed algorithm FedDistr.}
\label{fig:algo}
\end{figure}
\subsection{Distribution Disentangling}\label{sec:disentang}
The first step is to disentangle the local data distribution into several base distribution for each client by leveraging the autoencoder part of the LDM. The autoencoder \XY{(\citeauthor{van2017neural}, \citeyear{van2017neural}; \citeauthor{agustsson2017soft}, \citeyear{van2017neural})} consists of an encoder, $E$, and a decoder, $\mathcal{D}$. The encoder, $E$, maps input images $x$ to a spatial latent code, denoted as $z = E(x)$, while the decoder $\mathcal{D}$ is trained to perform the inverse mapping of the encoder, achieving an approximation $x \approx \mathcal{D}(E(x))$.

Specifically, each client $k$ initially encodes the private data into the latent feature embeddings as:
\begin{equation}\label{eq:encode}
    z_{k,i} = E(x_{k,i}), x_{k,i}\in \calD_k.
\end{equation}
To disentangle the data distribution for client $k$, the data is segregated based on these latent feature embeddings. Furthermore, \textit{client $k$ performs clustering} on the set ${z_{k,i}}_{i=1}^{n_k}$ to derive $m_k$ clusters by optimizing the objective function:
\begin{equation} \label{eq:cluster}
    \min\sum_{z_{k,i} \in \calC_j}\sum_{j=1}^{m_k}\|z_{k,i}-\bar e_j\|,
\end{equation}
where $\calC_j$ represents the set of points assigned to cluster $j$ and $\bar e_j$ is the centroid (mean) of cluster $j$. Thus, $\{z_{k,i}\}_{i=1}^{n_k}$ is separated into $m_k$ datasets as $\{B_{k,i}\}_{i=1}^{m_k}$. According to the separation result based on the clustering, we learn the disentangled distribution in the next section for each separated dataset.

\subsection{Distribution Generation}
To estimate the base distribution $P$, each client learns their distribution parameters $v(P)$ by leveraging the embedding inversion technique in the diffusion model of LDM \XY{(\citeauthor{ho2022classifier},\citeyear{ho2022classifier};\citeauthor{liang2024diffusion},\citeyear{liang2024diffusion})}. Specifically, a fixed prompt, such as ``a photo of'', is encoded using a frozen text encoder into a word embedding $p$. A learnable distribution parameter (prompt embedding) $v$ is then randomly initialized and concatenated with $p$ to form the guiding condition [$p$; $v$]. This combined condition is employed to compute the LDM's loss function. Our optimization goal to learn the distribution parameter $v_{k,i}$ of $i_{th}$ base distribution for client $k$ is formulated as:
\begin{equation}\label{eq: equal-local-invert}
\begin{split}
v_{k,i}=&\mathop{\arg\min}\limits_{v}\mathbb{E}_{z\sim E(B_{k,i}),p,\epsilon\sim\mathcal{N}(0,1),t}[\|\epsilon- \\
&\epsilon_\theta(\sqrt{\alpha_t}z+\sqrt{1-\alpha_t}\epsilon,t,[p;v])\|_2^2], \\
\end{split}
\end{equation}
where $z$ is the latent code of the input image $B_{k,i}$ generated by the encoder $E$, $\epsilon$ refers to noise sampled from the standard normal distribution $\mathcal{N}(0,1)$, $t$ denotes the timestep in the diffusion process, $\alpha_t$ is a hyperparameter related to $t$ and $\epsilon_\theta$ is the model that predicts the noise. Therefore, we derive the $m_k$ data representations $v_{k,i}$ by solving the optimization problem in Eq. \eqref{eq: equal-local-invert}.

\begin{wrapfigure}{r}{0.55\textwidth}
\vspace{-15pt}
\begin{minipage}{0.55\textwidth}
\begin{algorithm}[H]
	\caption{FedDistr}
	\begin{algorithmic}[1]
	\Statex \textbf{Input:} Local training epochs $T_l$, \# of clients $K$, learning rate $\eta$, the dataset $\calD_k=\{x_{k,i}, y_{k,i}\}_{i=1}^{n_k}$ owned by client $k$, pretrained latent diffusion model including a encoder $E$ and a generator $G$.

    \Statex \textbf{Output:} $v$ \vspace{4pt}
    \State \gray{$\triangleright$ \textit{Clients perform:}}
         \For{Client $k$ in $\{1,\dots,K\}$}:
    \State Sample $x_i$ from $\calD_k$;
    \State $z_{k,i} = E(x_{k,i})$;
    \State Clustering $\calD_k$ into $m_k$ datasets $\{B_{k,i}\}_{i=1}^{m_k}\}$         \Statex \hspace{\algorithmicindent} \ according to $z_{k,i}$ as Eq. \eqref{eq:cluster};
    \State Initialize $\{v_{k,i}\}_{i=1}^{m_k}$;
    \For{$i$ in [$m_k$]}:
        \State Learn the base distribution parameter $v_{k,i}$ 
            \Statex \hspace{\algorithmicindent} \hspace{\algorithmicindent} \ by optimizing Eq. \eqref{eq: equal-local-invert};
    \EndFor
    \State  Upload \XY{$\{\widetilde{v}_{k,i}\}_{i=1}^{m_k}$} to the server:
        \Statex \hspace{\algorithmicindent} \ \  \XY{$\widetilde{v}_{k,i}=v_{k,i} / max\{1, \frac{\|v_{k,i}\|_2}{C}\} + \mathcal{N}(0, \sigma^2 C^2 I)$}.
    \EndFor
    \State \gray{$\triangleright$ \textit{The server performs:}}
    \State Build the binary assignment $e_{i,j}^{k_1,k_2}$ via  Eq. \eqref{eq:KM};
    \State Obtain multiple parallel set:
        \Statex \ \ \ $\calI_p=\{(k, i)| e^{k_1,k_2}_{i_1,i_2}=1, k\in [K], i \in [m_k] \}$;
    \State Obtain the orthogonal set:
        \Statex \ \ \  $ \calI_o = \{(k, i)| k\in [K], i \in [m_k] \} -\calI_p$
    \State Obtain $ v_o = \bigcup_{(k,i) \in \calI_o} v_{k,i}$;
    \State Select the distribution parameter $v_p$ in $\calI_p$ with the 
        \Statex \ \ maximum trained data number; 
    \State Distribute $v = [v_p, v_o] $ to all clients.
    \XY{
    \State \gray{$\triangleright$ \textit{Clients perform:}} 
        \For{Client $k$ in $\{1,\dots,K\}$}:
        \State Generate data $\{\widehat{x}_{i,j} | j\in [n_i] \}_{i=1}^{|v_0|}$ using $v$: 
            \Statex \hspace{\algorithmicindent} \ \ $\widehat{x}_{i,j} = D(f(z_{T}, T, [p, v_o^i])), j \in [n_i]$;
        \State Train local model using generated Data:
            \Statex \hspace{\algorithmicindent} \ \ $\mathcal{F}_{k}^{*} = \text{argmin}_{\mathcal{F}_{k}} \mathbf{E}_{i\in [|v_0|]} \mathcal{L}_{CE} (\mathcal{F}_{k}(\widehat{x}_{i,j}), \widehat{y}_i)$.
        \EndFor \\
\Return Downstream model $\{\mathcal{F}_{k}^{*}\}_{k=1}^{K}$.}
\end{algorithmic}\label{alg:aof-hfl}
\end{algorithm}
    \end{minipage}
    \vspace{-50pt}
\end{wrapfigure}

These representations are then \XY{utilized to} the diffusion model to generate the corresponding datasets for each disentangled base distribution $i$. 
In the federated learning setting, each client uploads their \XY{noised} respective set of representations \XY{$\{\widehat{v}_{k,i}\}_{i=1}^{m_k}$} and corresponding data number to the server.
\XY{
\begin{equation}
    \begin{split}
        \widetilde{v}_{k,i} = & \ v_{k,i} / max\{1, \frac{\|v_{k,i}\|_2}{C}\} \\
        & + \mathcal{N}(0, \sigma^2 C^2 I), \\
        & \ \  i \in [m_k], \text{ for } k \in [K],
    \end{split}
    \label{eq:embedding_add_noise}
\end{equation}
where $C$ is the norm upper bound, $\sigma$ is the standard derivation of added noise.}

\subsection{Active Distribution Alignment}
After each client $ k $ uploads the distribution parameters $ \{v_{k,i}\}_{i=1}^{m_k} $, the server first distinguishes between orthogonal and parallel data base distributions for different clients based on these parameters. Specifically, when the distance between distribution parameters, such as $ \|v_{k_1,i} - v_{k_2,j}\|_2^2 $, is small, it indicates that clients $ k_1 $ and $ k_2 $ share the same data base distribution.

Moreover, \XY{since there are no explicit labels for the uploaded base distribution embeddings}, matching these parameters across different clients presents a challenge. This matching problem can be framed as an assignment problem in a bipartite graph \cite{dulmage1958coverings}. To address this assignment problem, we utilize the Kuhn-Munkres algorithm (KM) \cite{zhu2011group}, which is designed to find the maximum-weight matching in a weighted bipartite graph, or equivalently, to minimize the assignment cost. Specifically, the goal is to optimize the following expression for matching prompt embeddings between client $ 1$ and $ k$:
\begin{equation}\label{eq:KM}
\min \sum_{i_1=1}^{m_{k_1}}\sum_{i_2=1}^{m_{k_2}}\|v_{k_1,i_1} -v_{k_2,i_2}\| e^{k_1,k_2}_{i_1,i_2},
\end{equation}
where $e^{k_1,k_2}_{i_1,i_2}$ represents the binary assignment variable, which equals $1$ if cluster $i_1$ is assigned to cluster $i_2$, and $0$ otherwise. Therefore, the server obtains multiple parallel sets $\calI_p=\{(k, i)| e^{k_1,k_2}_{i_1,i_2}=1, k\in [K], i \in [m_k] \}$. The orthogonal sets $ \calI_o = \{(k, i)| k\in [K], i \in [m_k] \} -\calI_p$ consist of other pairs that are not part of the parallel sets.

Next, for the orthogonal sets, the server simply concatenates the parameters as follows
$v_o = \bigcup_{k \in \calI_o} v_{k,i}$.

For the parallel sets, the server selects one of the dominant distribution parameters (trained with the data number is the maximum) to represent all parameters in that set, which only requires a single communication round.

Finally, the server distributes the consolidated distribution parameters $ v = [v_p, v_o] $ to all clients.

\subsection{\XY{Local Data Generation and Downstream Training}}

\XY{
Each client receives the distributed base distributions $v=[v_p,v_o]$ from the server. For each base distribution embedding $v_o^i \in v_o$, clients use the latent diffusion model to generate data $\widehat{x}_i=\{\widehat{x}_{i,j} | j\in [n_i] \}$ following the base distribution.
    \begin{equation}
        \widehat{x}_{i,j} = D(f(z_{T}, T, [p, v_o^i])), j \in [n_i] \text{, for } i \in [|v_0|],
    \end{equation}
    where decoder $D$ is for the inverse mapping of encoder $E$, $f$ is the denoising process, $T$ is the number of diffusion steps, and $[p, v_i]$ is the conditional prompt corresponding to base distribution $i$.
    
Using the synthetic data, the clients locally train their downstream task model.
    \begin{equation}
        \mathcal{F}_{k}^{*} = \text{argmin}_{\mathcal{F}_{k}} \mathbf{E}_{i\in [|v_0|]} \mathcal{L}_{CE} (\mathcal{F}_{k}(\widehat{x}_{i,j}), \widehat{y}_i) \text{, for } k \in [K],
    \end{equation}
    where $\mathcal{F}_k$ is the local downstream model of client $k$, $\mathcal{L}_{CE}$ is the cross entropy loss, $\{\widehat{x}_{i,j} | j\in [n_i] \}$ is the generated data corresponding to base distribution $i$.
}

\section{Experiment Results}\label{experiment}

In this section, We present empirical studies to compare FedDistr with existing methods on utility, privacy and communication efficiency. Due to the page limit, please see the ablation study on more clients and large $\xi$ in Appendix B.

\subsection{Experimental settings} \label{sec:setup}
\noindent\textbf{Models \& Datasets.} 
We conduct experiments on two datasets:
\textit{CIFAR100} has the 20 superclass and each superclass has 5 subclass \citep{krizhevsky2014cifar}, thus, total 100 subclass; \textit{DomainNet} \citep{wu20153d} has the 345 superclass (label) and each superclass has 5 subclass (style), thus, total 1725 subclass. We adopt ResNet \citep{lecun1998gradient} for conducting the classification task to distinguish the superclass\footnote{each client only know the label of the superclass but doesn't know the label of subclass} on CIFAR100 and DomainNet. Please see details in Appendix A.

\noindent\textbf{Federated Learning Settings.} 
We simulate a horizontal federated learning system with K = 5, 10, 20 clients in a stand-alone machine with 8 Tesla V100-SXM2 32 GB GPUs and 72 cores of Intel(R) Xeon(R) Gold 61xx CPUs. For DomainNet and CIFAR100, we regard each subclass follow one sub-distribution ($P_i$). For the disentangled extent, we choose the averaged entangled coefficient $s = \frac{2}{K(K-1)}\sum_{k_1,k_2}d_{k_1,k_2}$ over all clients as $0, 0.003, 0.057$. The detailed experimental hyper-parameters are listed in Appendix A. 

\noindent\textbf{Baseline Methods.}
\textit{Four} existing methods i.e., FedAvg \citep{mcmahan2017communication}: FedProx \citep{li2020federated}, SCAFFOLD \citep{karimireddy2020scaffold}, MOON \citep{li2021model} and the proposed method \textit{FedDistr} are compared in terms of following metrics.

\noindent\textbf{Evaluation metric.}
We use the (1 - model accuracy) to represent the utility loss, the rounds of required communication and the number of transmitted parameters to represent communication consumption, and the privacy budget in \citep{abadi2016deep} to represent the privacy leakage.



\subsection{Comparison with Other  Methods}
We first evaluate the tradeoff between utility loss and communication rounds in Fig. \ref{fig:tradeoff-u-c}. Our analysis leads to two conclusions: 1) The proposed FedDistr achieves the best tradeoff between utility loss and communication rounds compared to other methods across various entangled cases ($\xi = 0, 0.003, 0.057$). For instance, in the disentangled case, FedDistr achieves a 40\% utility loss with only one communication round, whereas MOON incurs approximately a 60\% utility loss while requiring 100 communication rounds on CIFAR-100. 2) As the entanglement coefficient increases, other methods, such as FedProx, demonstrate improved performance, while FedDistr remains stable, consistently achieving a 40\% utility loss on CIFAR-100 across different entangled scenarios.

    

\begin{figure}[ht]
    \centering
    \begin{subfigure}{0.33\textwidth}
      \includegraphics[width=\linewidth]{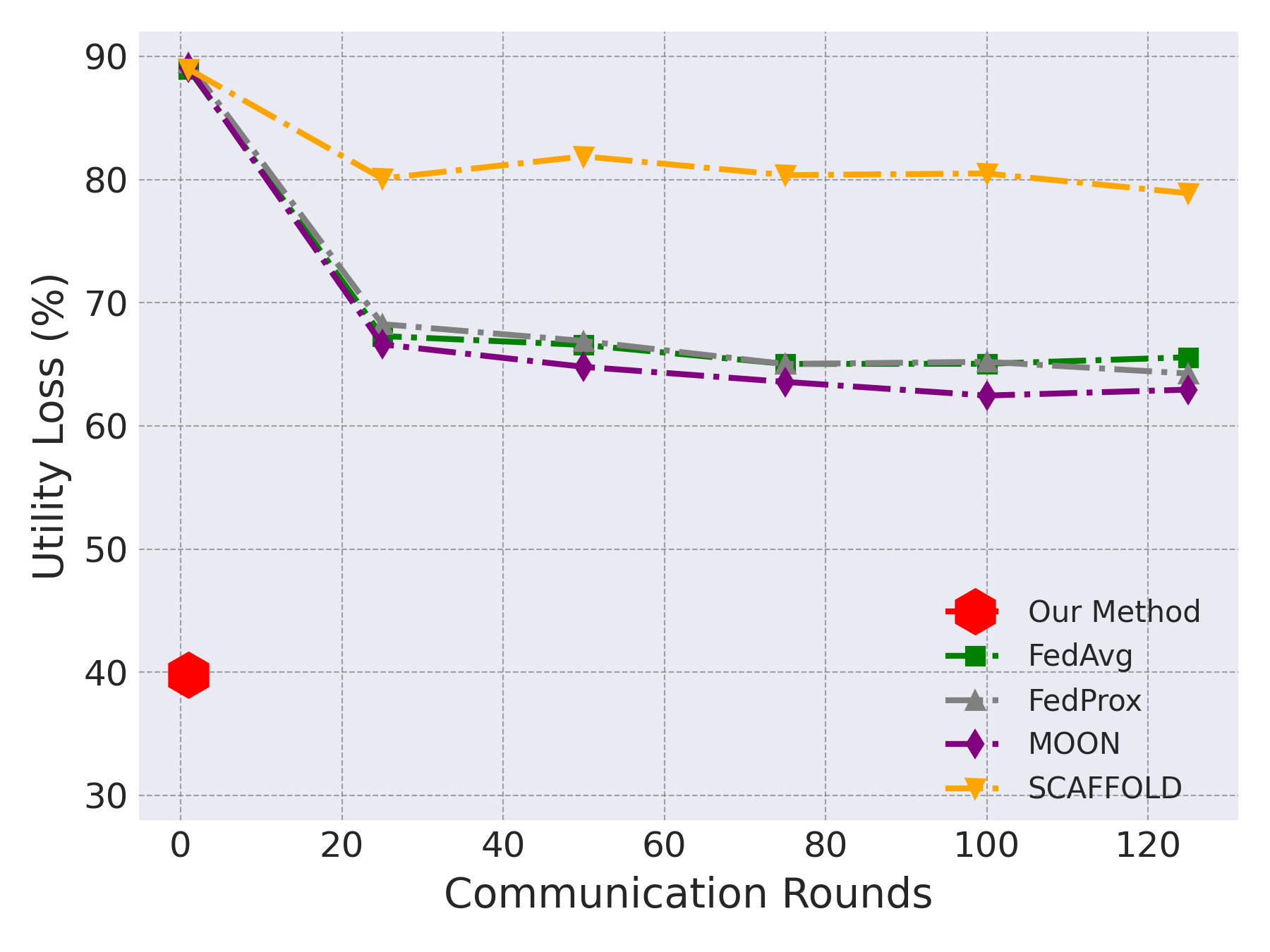}
      \caption{Disentangled for CIFAR100}
    \end{subfigure}\hfil
    \begin{subfigure}{0.33\textwidth}
      \includegraphics[width=\linewidth]{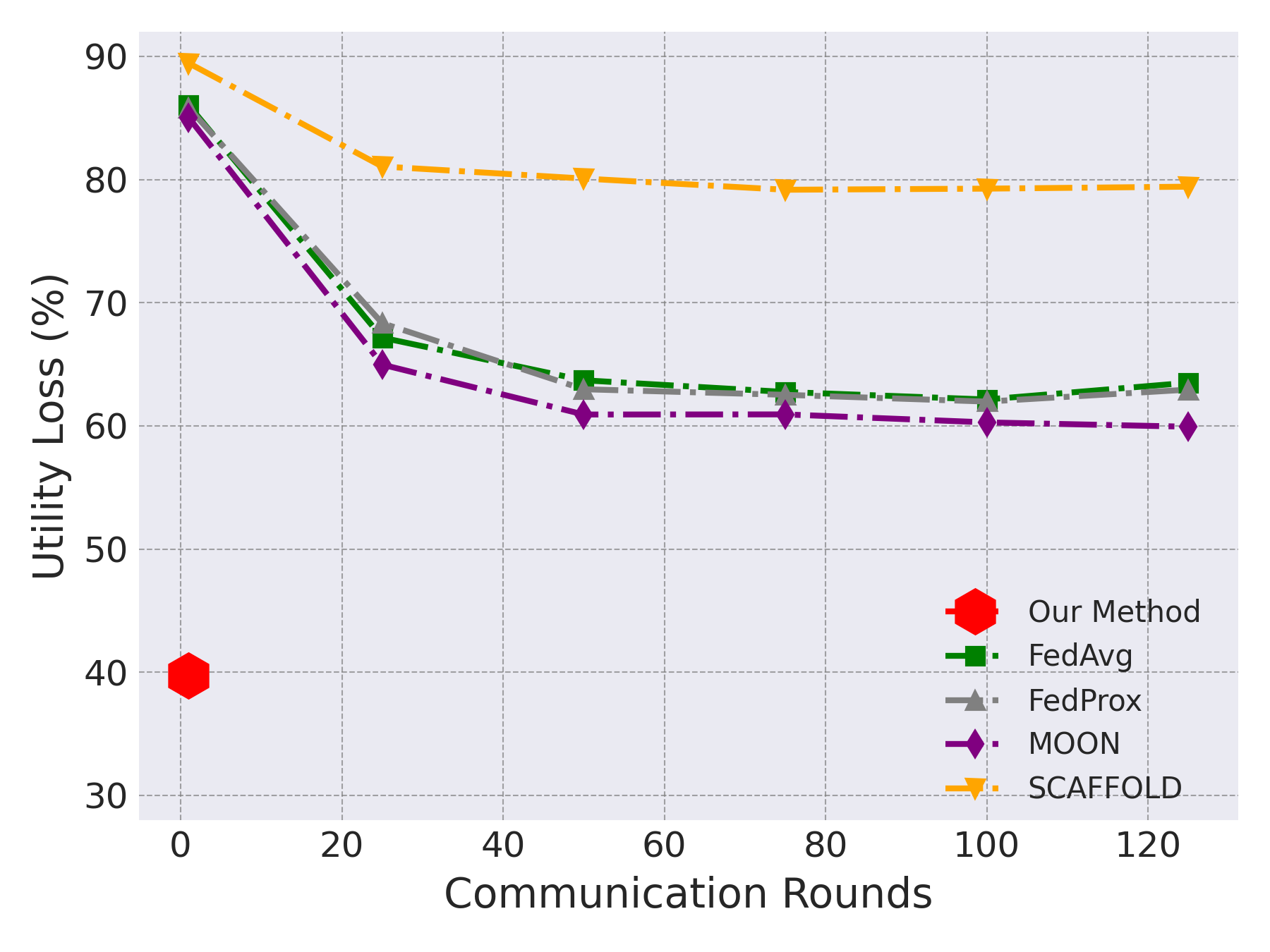}
      \caption{$0.003$-entangled for CIFAR100}
    \end{subfigure}\hfil
    \begin{subfigure}{0.33\textwidth}
      \includegraphics[width=\linewidth]{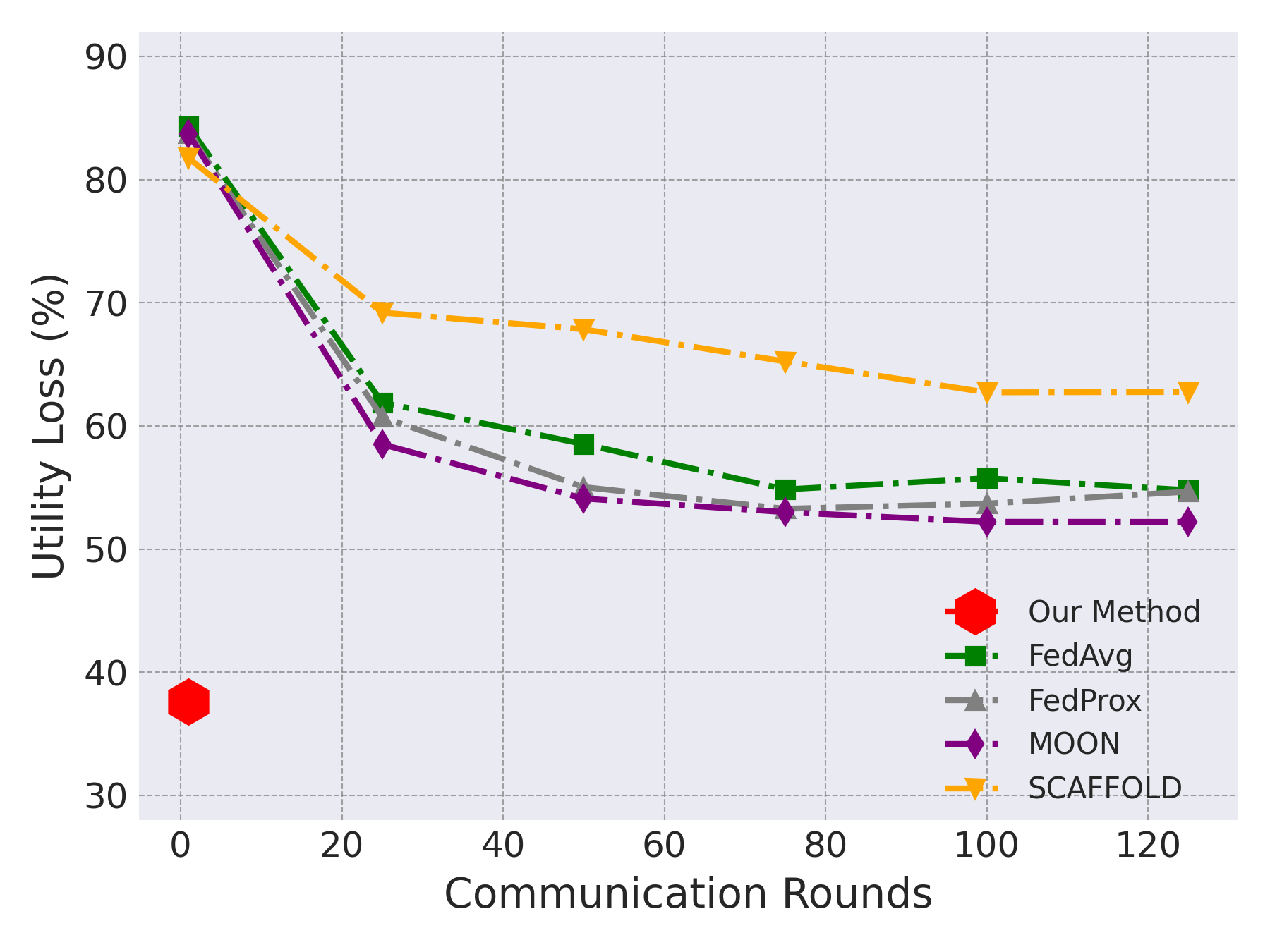}
      \caption{$0.057$-entangled for CIFAR100}
    \end{subfigure}

    \begin{subfigure}{0.33\textwidth}
      \includegraphics[width=\linewidth]{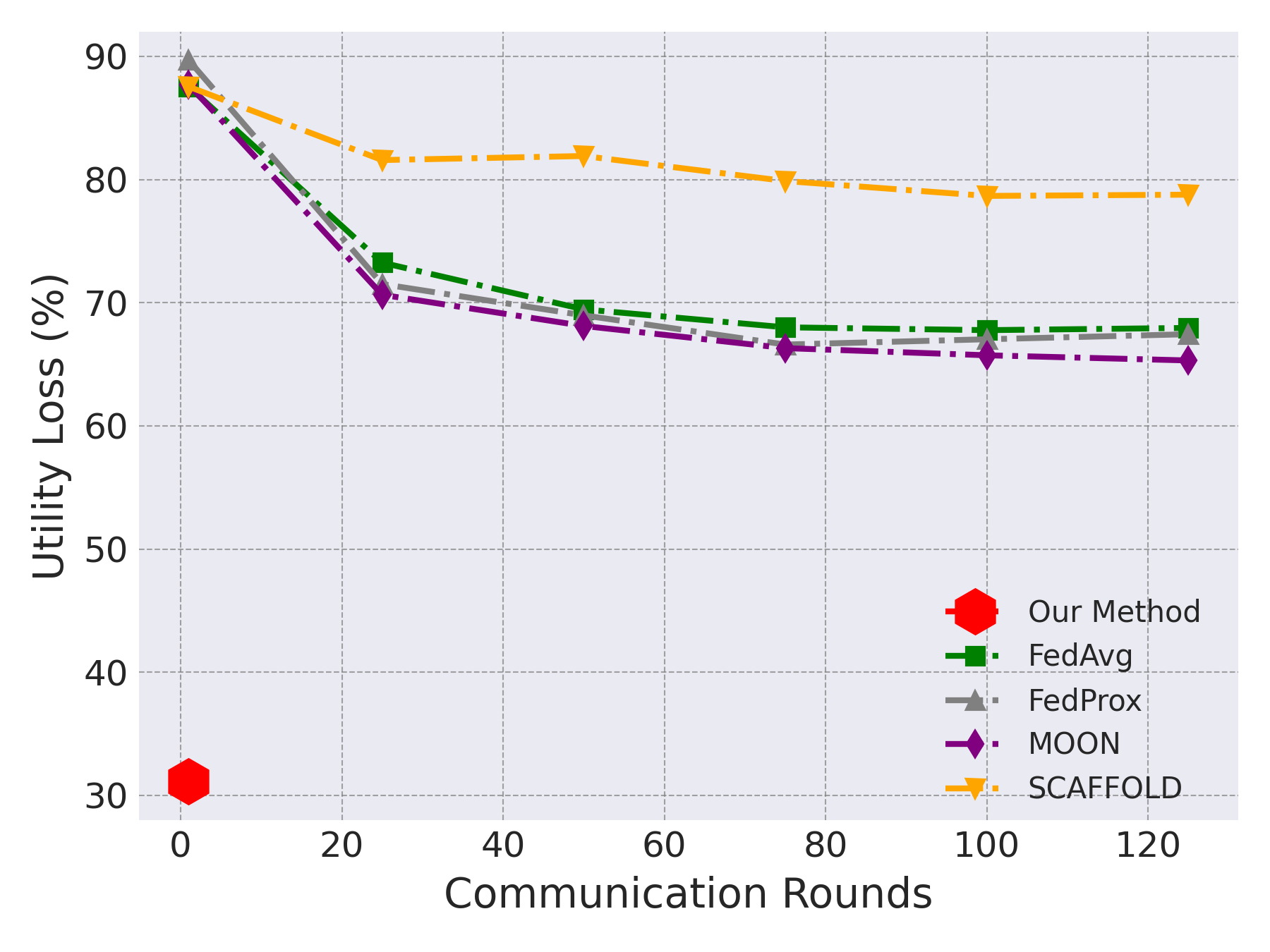}
      \caption{Disentangled on DomainNet}
    \end{subfigure}\hfil
    \begin{subfigure}{0.33\textwidth}
      \includegraphics[width=\linewidth]{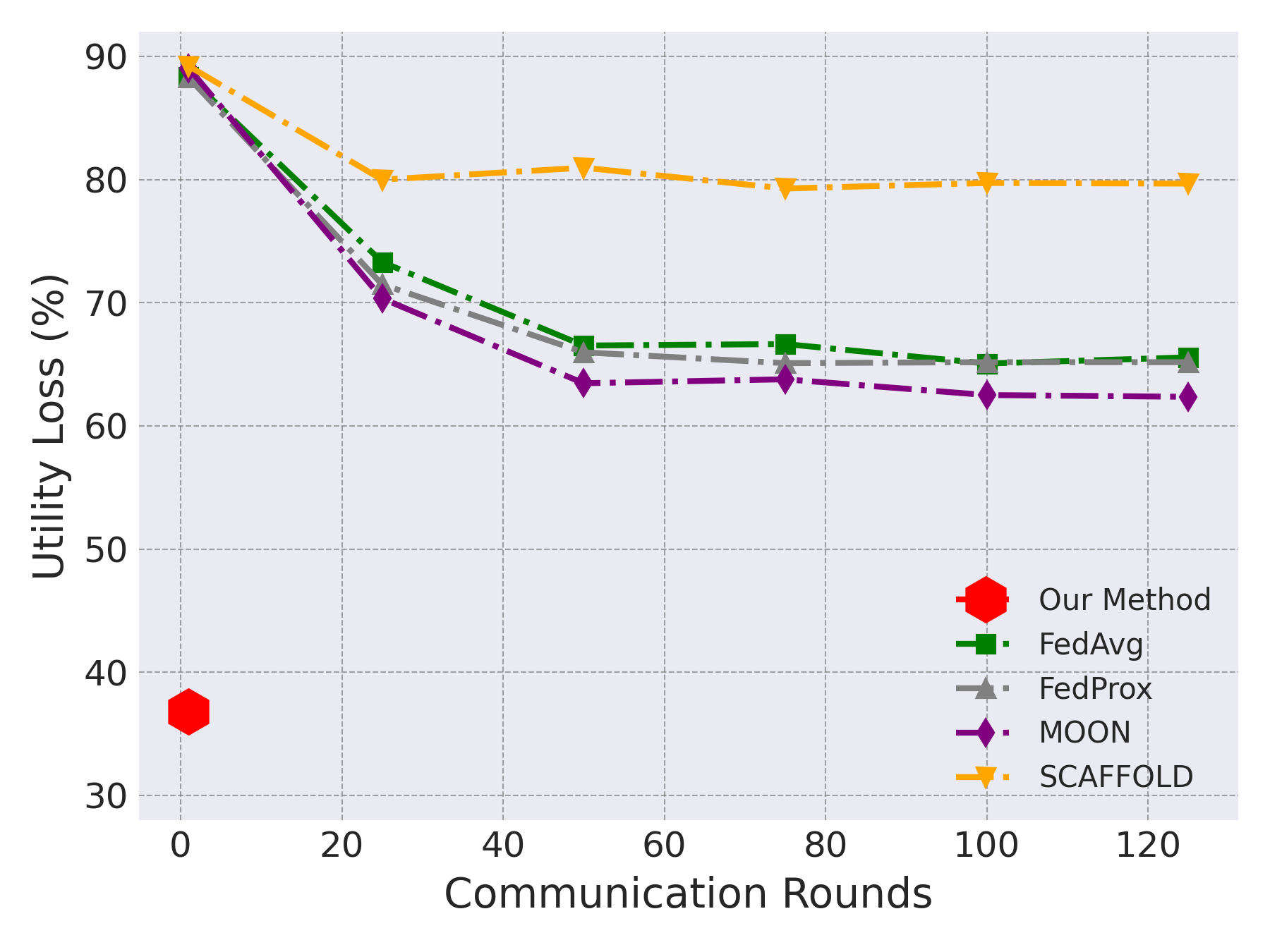}
      \caption{$0.003$-entangled for DomainNet}
    \end{subfigure}\hfil
    \begin{subfigure}{0.33\textwidth}
      \includegraphics[width=\linewidth]{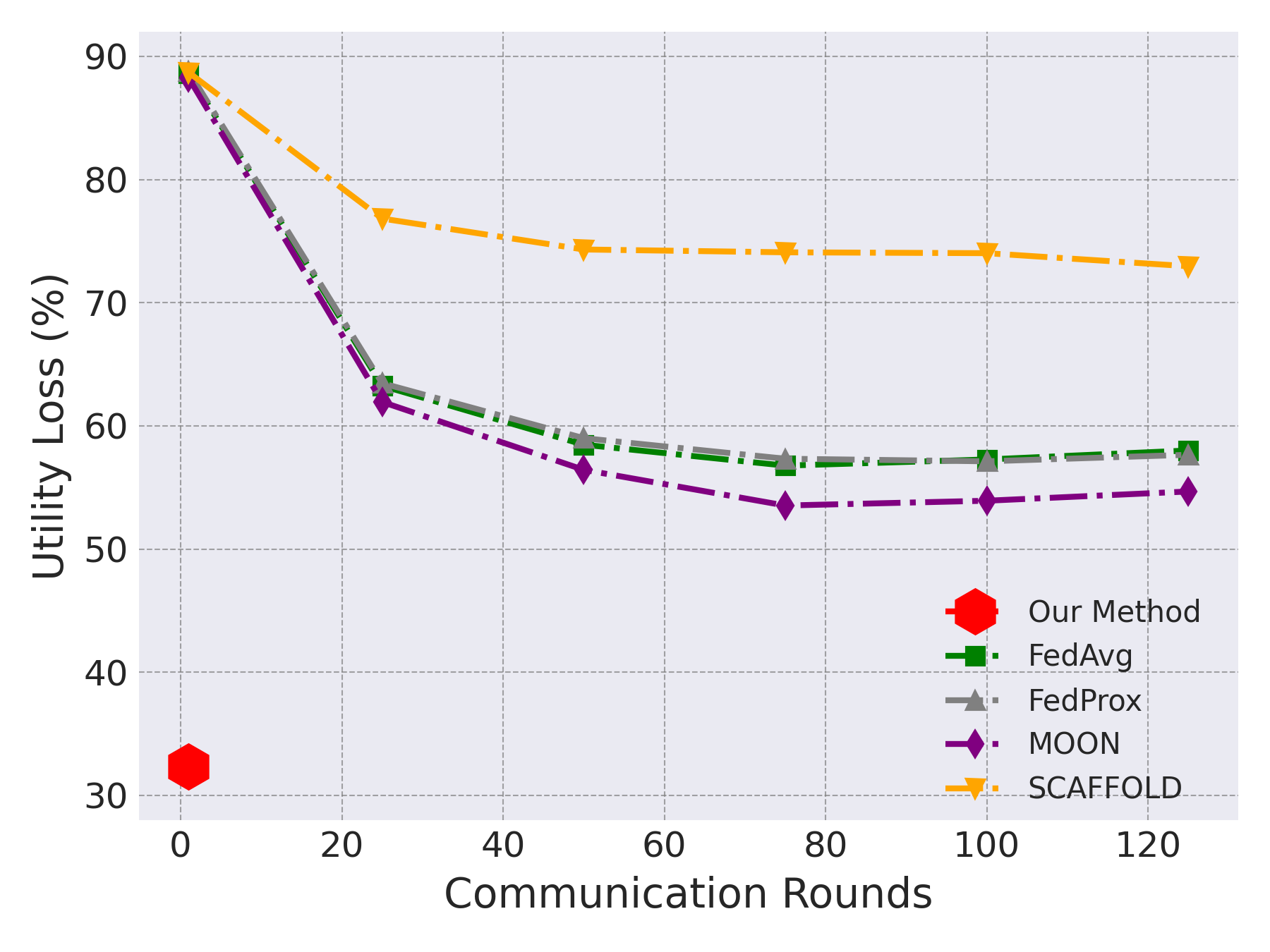}
      \caption{$0.057$-entangled for DomainNet}
    \end{subfigure}

    \caption{Tradeoff between utility loss and communication round for different methods under different $\xi$-entangled scenario on CIFAR100 and DomainNet.}
    \label{fig:tradeoff-u-c}
\end{figure}

\begin{figure}[ht]
    \centering
    \begin{subfigure}{0.33\textwidth}
      \includegraphics[width=\linewidth]{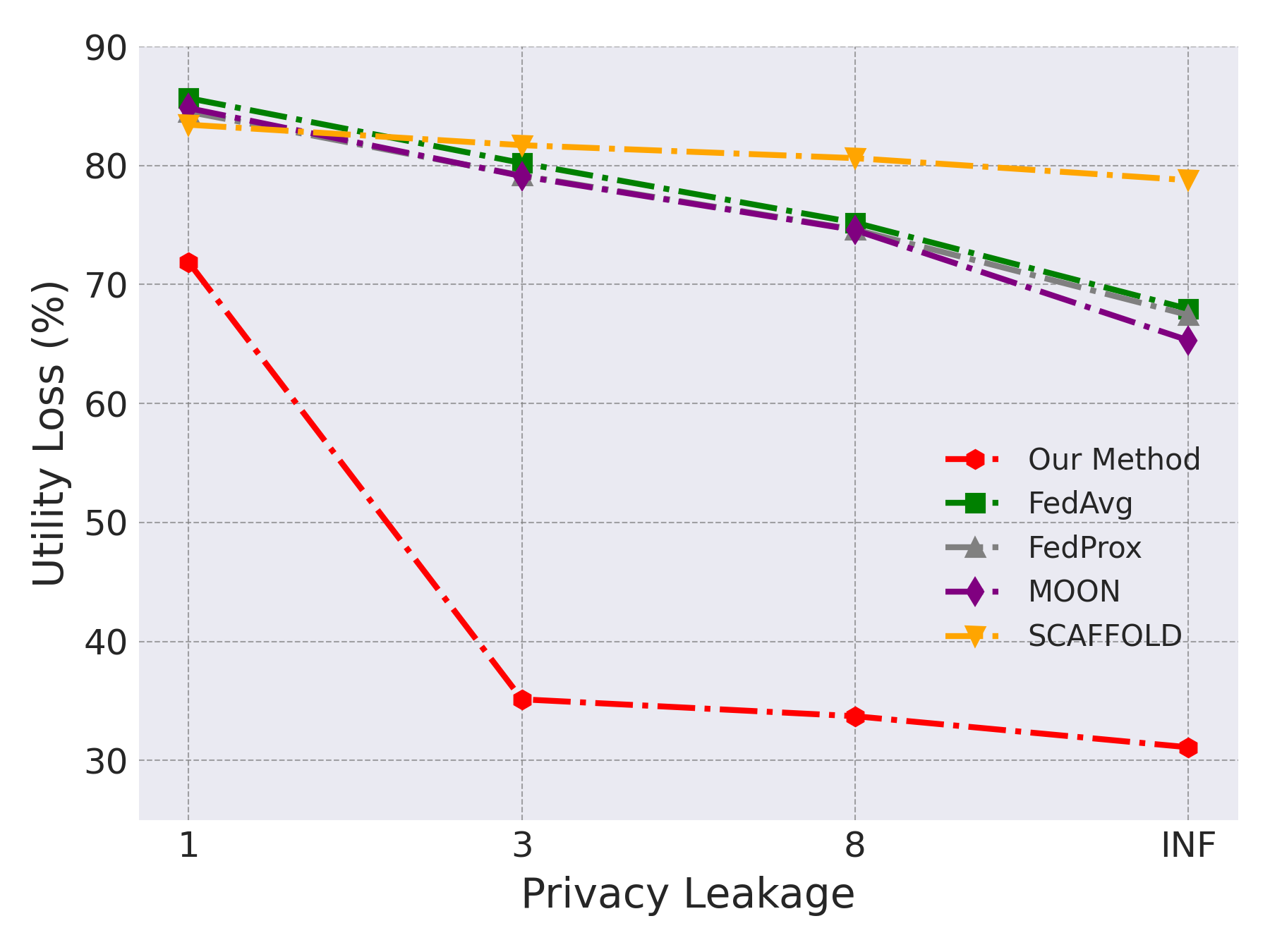}
      \caption{Disentangled }
    \end{subfigure}\hfil
    \begin{subfigure}{0.33\textwidth}
      \includegraphics[width=\linewidth]{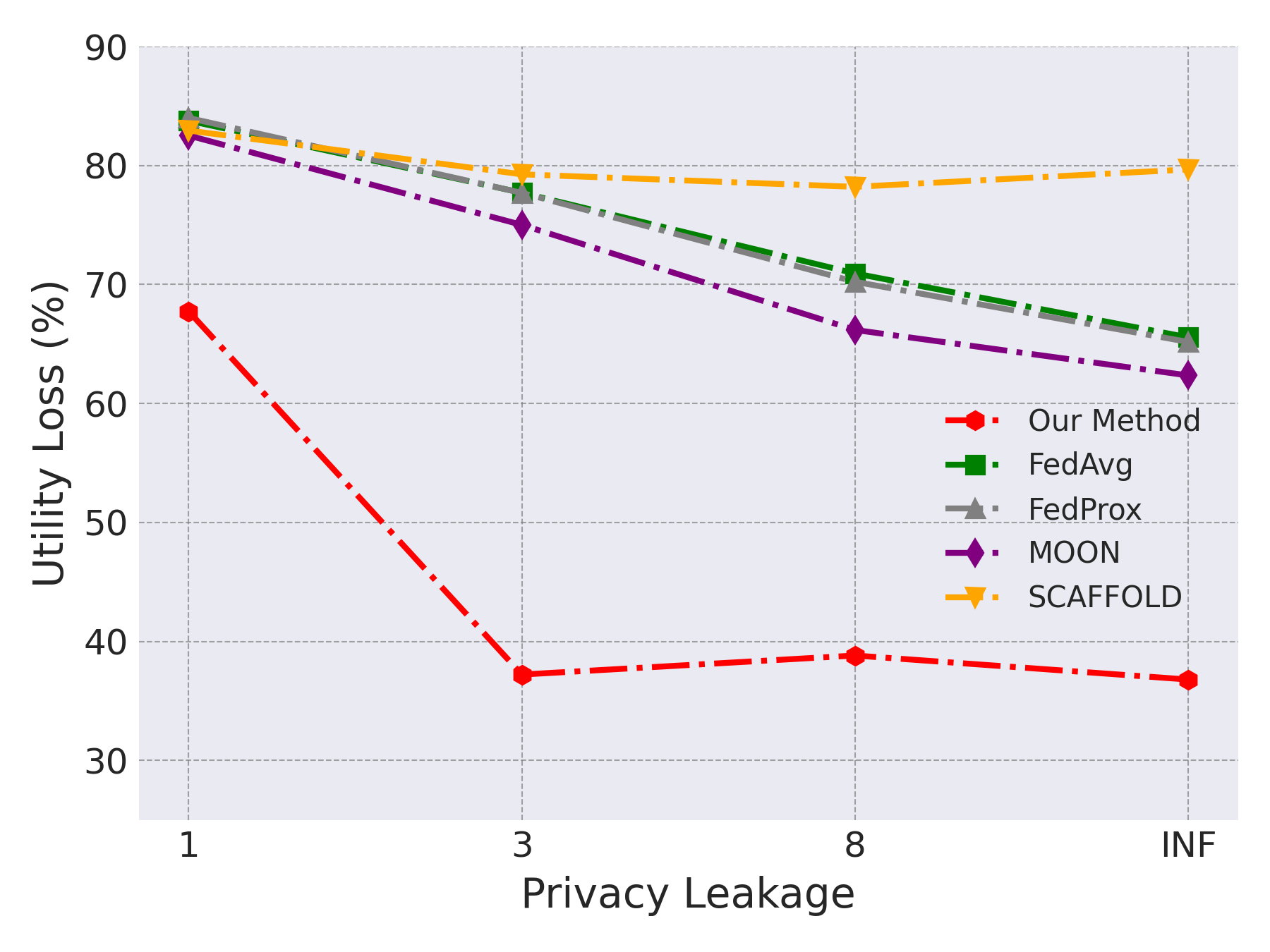}
      \caption{$0.003$-entangled}
    \end{subfigure}\hfil
    \begin{subfigure}{0.33\textwidth}
      \includegraphics[width=\linewidth]{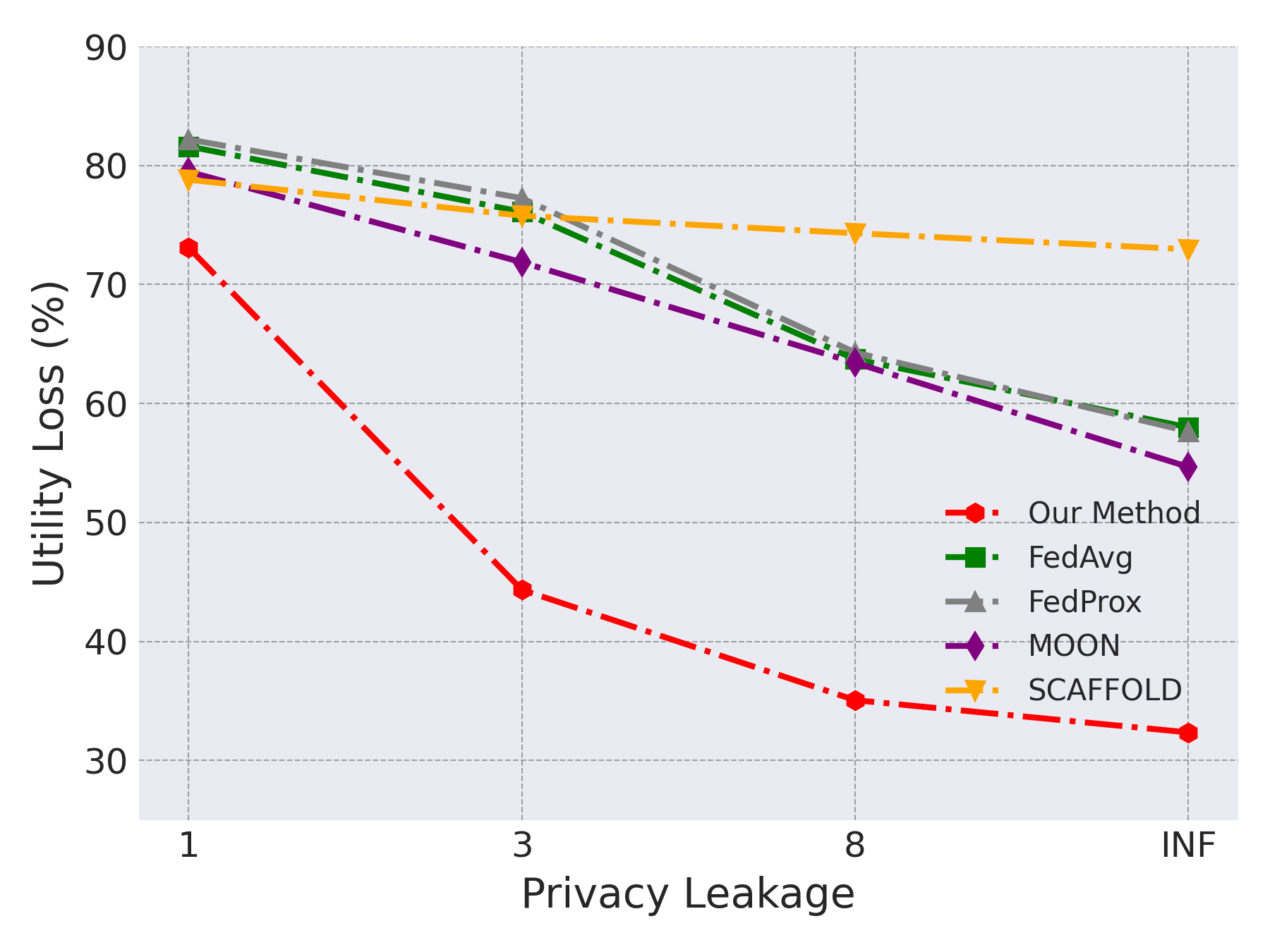}
      \caption{$0.057$-entangled}
    \end{subfigure}
    \caption{Tradeoff between utility loss and privacy leakage for different methods under different $\xi$-entangled scenario on CIFAR100.}
    \label{fig:tradeoff-p-u}
\end{figure}

Moreover, we add random noises \citep{abadi2016deep} to protect transmitted  parameters  to evaluate the tradeoff between privacy leakage and utility loss in Fig. \ref{fig:tradeoff-p-u}
. From this analysis, we can draw the following two conclusions: 1) The proposed FedDistr achieves the best tradeoff between utility loss and privacy leakage; for instance, FedDistr (red line) is positioned in the lower-left region compared to other methods on both DomainNet and CIFAR-100. 2) As the noise level (privacy leakage) increases, the utility loss of different methods also increases.

Finally, we evaluate communication consumption, encompassing both the number of communication rounds and the amount of transmitted parameters, for five existing methods alongside our proposed method under disentangled and various \(\xi\)-entangled scenarios. The results summarized in Table \ref{tab:time} lead us to the following two conclusions: 1) Under near-disentangled or disentangled settings, the communication rounds required by FedDistr is only one, while FedAvg necessitates 80 rounds; 2) The number of transmitted parameters for the proposed FedDistr is merely 30K, in contrast to the 11.7M required by FedAvg for model transmission. 


\begin{table}[!htbp]
\centering
\caption{Comparison on communication consumption until convergence (including communication rounds and number of the transmission parameters) for different methods under different $\xi$-entangled scenario.}
\begin{tabular}{@{}cccc@{}}
\toprule
$\xi$ & Method                       & Communication rounds & \multicolumn{1}{l}{Transmission parameters} \\ \midrule
\multirow{2}{*}{0 (Disentangled)}    & FedAvg   & 150                  & 11.7M                                       \\
                                 & FedDistr & 1                    & 30K                                         \\ \hline
\multirow{2}{*}{0.003} & FedAvg   & 100                  & 11.7M                                       \\
                                 & FedDistr & 1                    & 30K                                         \\ \hline 
\multirow{2}{*}{0.057} & FedAvg   & 80                   & 11.7M                                       \\
                                 & FedDistr & 1                    & 30K                                         \\  \hline 
\end{tabular}\label{tab:time}
\end{table}





\begin{figure}[ht]
    \centering
    \begin{subfigure}{0.33\textwidth}
      \includegraphics[width=\linewidth]{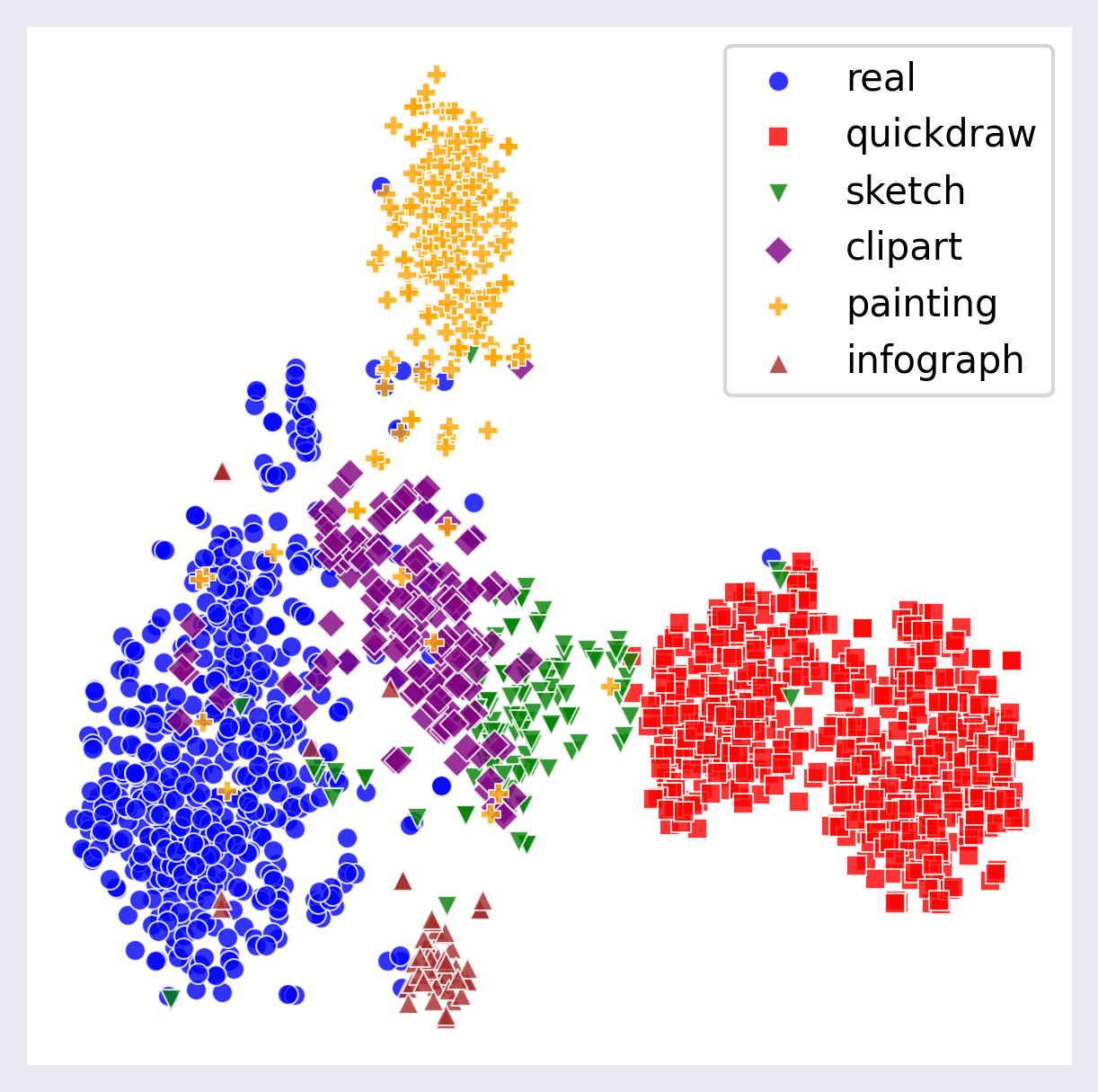}
      \caption{Flip Flops}
    \end{subfigure}\hfil
    \begin{subfigure}{0.33\textwidth}
      \includegraphics[width=\linewidth]{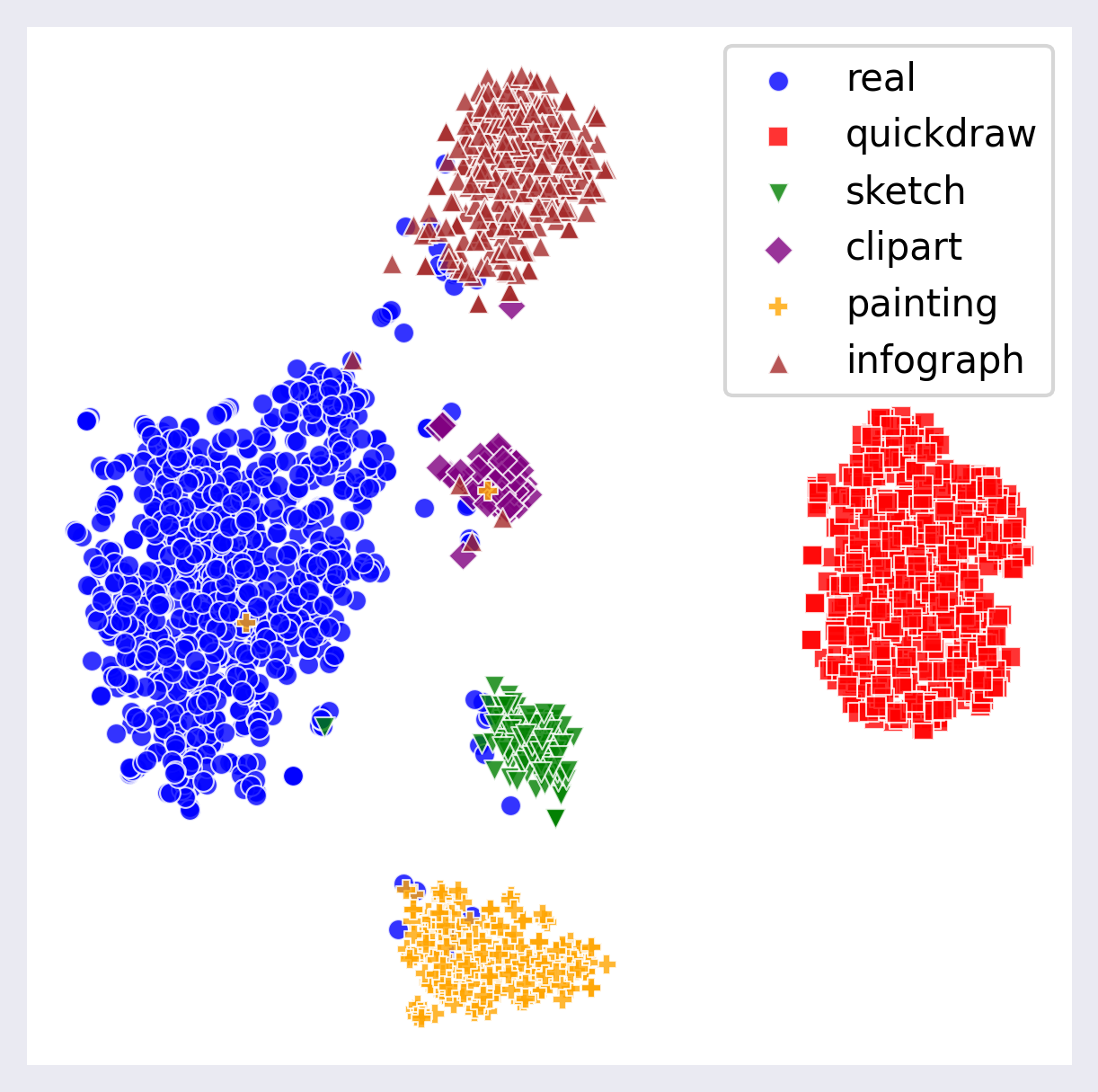}
      \caption{Garden}
    \end{subfigure}\hfil
    \begin{subfigure}{0.33\textwidth}
      \includegraphics[width=\linewidth]{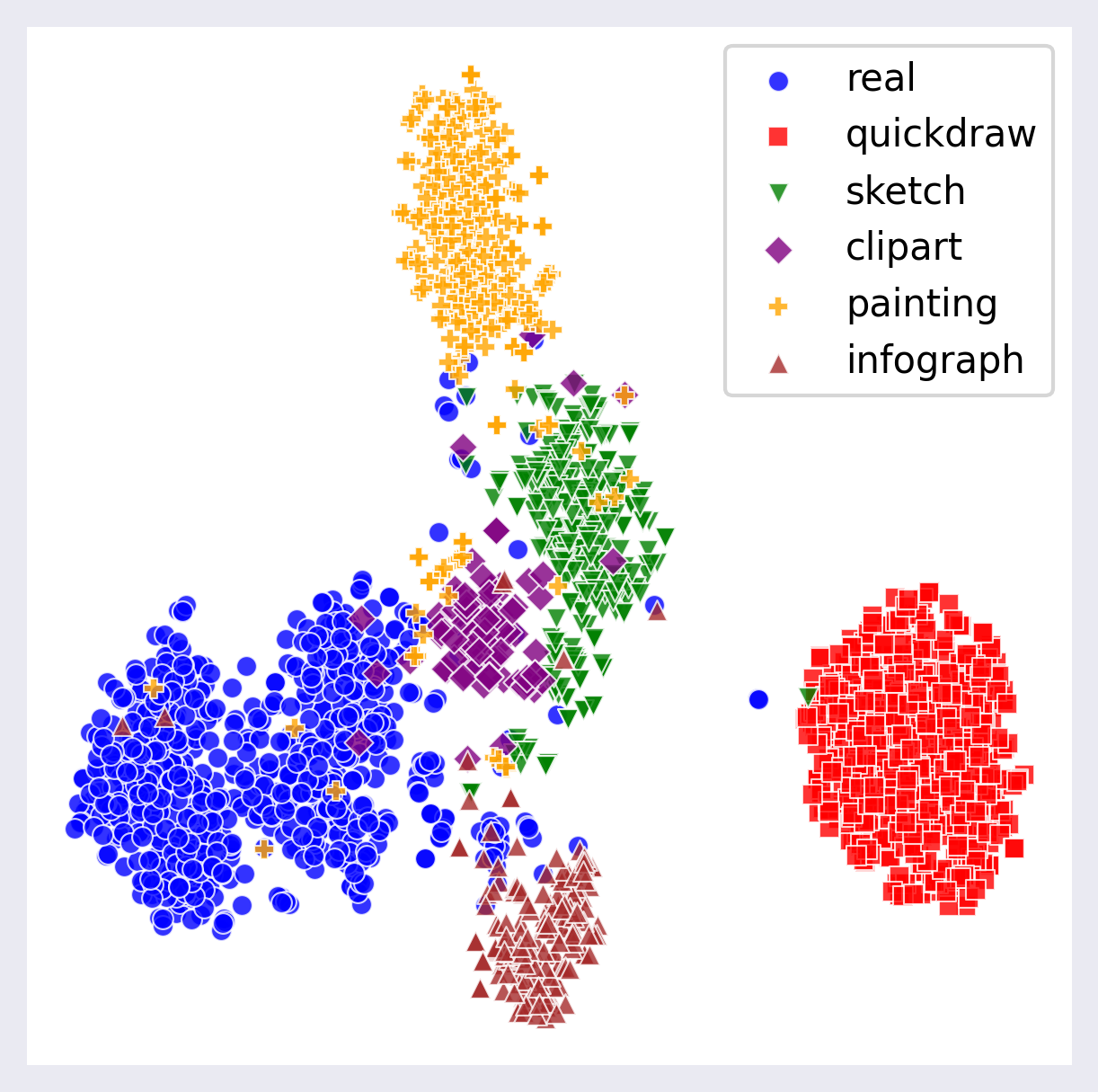}
      \caption{Grapes}
    \end{subfigure}
    \caption{Distribution Disentangling Visualization: clustering the encoded embedding $z_{k,i}$ illustrated in Sect. \ref{sec:disentang} of six subclass for each three superclass: Flip Flops, Garden, an Grapes}
    \label{fig:vis}
\end{figure}
\subsection{Visualization for distribution Disentangling}
We also present the distribution disentangling result  (clustering $z_{k,i}$ illustrated in Sect. \ref{sec:disentang}) in Fig. \ref{fig:vis}. It shows that clustering on encoded embedding can separate the data with different base distribution well. Therefore, our disentangling method has a good effectiveness.

\section{Discussion and Conclusion}
In this study, we have addressed the critical inefficiencies inherent in Federated Learning (FL) due to the entanglement of client data distributions. By analyzing the distribution entanglement, we demonstrated that achieving a disentangled data structure significantly enhances both model utility and communication efficiency. Our theoretical analysis shows that under near-disentangled conditions, FL can achieve optimal performance with a single communication round, thus aligning closely with the efficiency of traditional distributed systems. 

Furthermore, we propose the FedDistr algorithm by leveraging the diffusion model. The integration of stable diffusion models for data decoupling proves to be a robust solution, paving the way for future explorations in privacy-preserving machine learning. Ultimately, this work contributes to the advancement of FL by providing a clear pathway toward more efficient and effective federated learning.

With advancements in computational capabilities during the era of large language models (LLMs) \citep{kasneci2023chatgpt}, the time consumption for local training has significantly decreased. Consequently, our focus is on enhancing communication efficiency. Moreover, the communication efficiency is especially important particularly in Wide Area Network scenarios \citep{mcmahan2017communication, palmieri2010towards}.

Whether transferring data distribution leaks privacy is also an intriguing problem. Some studies \citep{xiao2023pac} regard the underlying data distribution as a non-sensitive attribute, assuming that sharing insights into the distribution of data across users does not compromise individual privacy. However, other work indicates that using generative models to estimate the distribution still poses a risk of privacy leakage \citep{wu2021fedcg}.





\bibliographystyle{iclr2025_conference}
\bibliography{Reference/attack.bib, Reference/gen_model, Reference/cryptography.bib, Reference/FL.bib, Reference/model.bib, Reference/basis, Reference/moo}

\newpage
\appendix
\setcounter{equation}{0}
\setcounter{theorem}{0}
\setcounter{prop}{0}
\setcounter{definition}{0}

\section{Experimental Setting}

This section provides detailed information on our experimental settings. 



\subsection{Dataset \& Model Architectures}

\noindent\textbf{Models \& Datasets.} 
We conduct experiments on two datasets: \textit{CIFAR100} \cite{krizhevsky2014cifar} and \textit{DomainNet} \cite{wu20153d}. For \textit{CIFAR100}, we select 10 out of 20 superclasses, and 2 out of 5 subclasses in each superclass. For \textit{DomainNet} \cite{wu20153d}, we select 10 out of 345 superclasses (labels), and 3 out of 5 subclasses (domains) in each superclass. We adopt ResNet \cite{lecun1998gradient} for conducting the classification task to distinguish the superclass\footnote{each client only know the label of the superclass but doesn't know the label of subclass} on CIFAR100 and DomainNet.

\noindent\textbf{Federated Learning Settings.} 
We simulate a horizontal federated learning system with K = 5, 10, 20 clients in a stand-alone machine with 8 NVIDIA A100-SXM4 80 GB GPUs and 56 cores of dual Intel(R) Xeon(R) Gold 6348 CPUs. For DomainNet and CIFAR10, we regard each subclass follow one sub-distribution ($P_i$). For the disentangled extent, we choose the averaged entangled coefficient $s = \frac{2}{K(K-1)}\sum_{k_1,k_2}d_{k_1,k_2}$ over all clients as $0, ...$. The detailed experimental hyper-parameters are listed in Appendix A. 

\noindent\textbf{Baseline Methods.}
\textit{Four} existing methods i.e., FedAvg \cite{mcmahan2017communication}: FedProx \cite{li2020federated}, SCAFFOLD \cite{karimireddy2020scaffold}, MOON \cite{li2021model} and the proposed method \textit{FedDistr} are compared in terms of following metrics.

\noindent\textbf{Evaluation metric.}
We use the model accuracy on the main task to represent the utility, the communication rounds and transmission parameters to represent communication efficiency, and the standard derivation of noise level to represent the privacy leakage. 

\section{Ablation Study} \label{sec:aba-tradeoff}
We evaluate the different methods for a large $\xi =0.385$ in Figure \ref{fig:lagre-xi}. It shows that even for a large $\xi$, FedDistr achieves the best tradeoff between the communication rounds and utility loss. Moreover, compared to small $\xi$, other methods such as MOON converges to a better utility while it still requires beyond 100 communication rounds.  Finally, we test the robustness of FedDistr for different number of clients in Figure \ref{fig:more-clients}. It illustrates FedDistr still achieve the better utlity than FedAvg with 20 clients.
\begin{figure}[ht]
    \centering
    \begin{subfigure}{0.4\textwidth}
      \includegraphics[width=\linewidth]{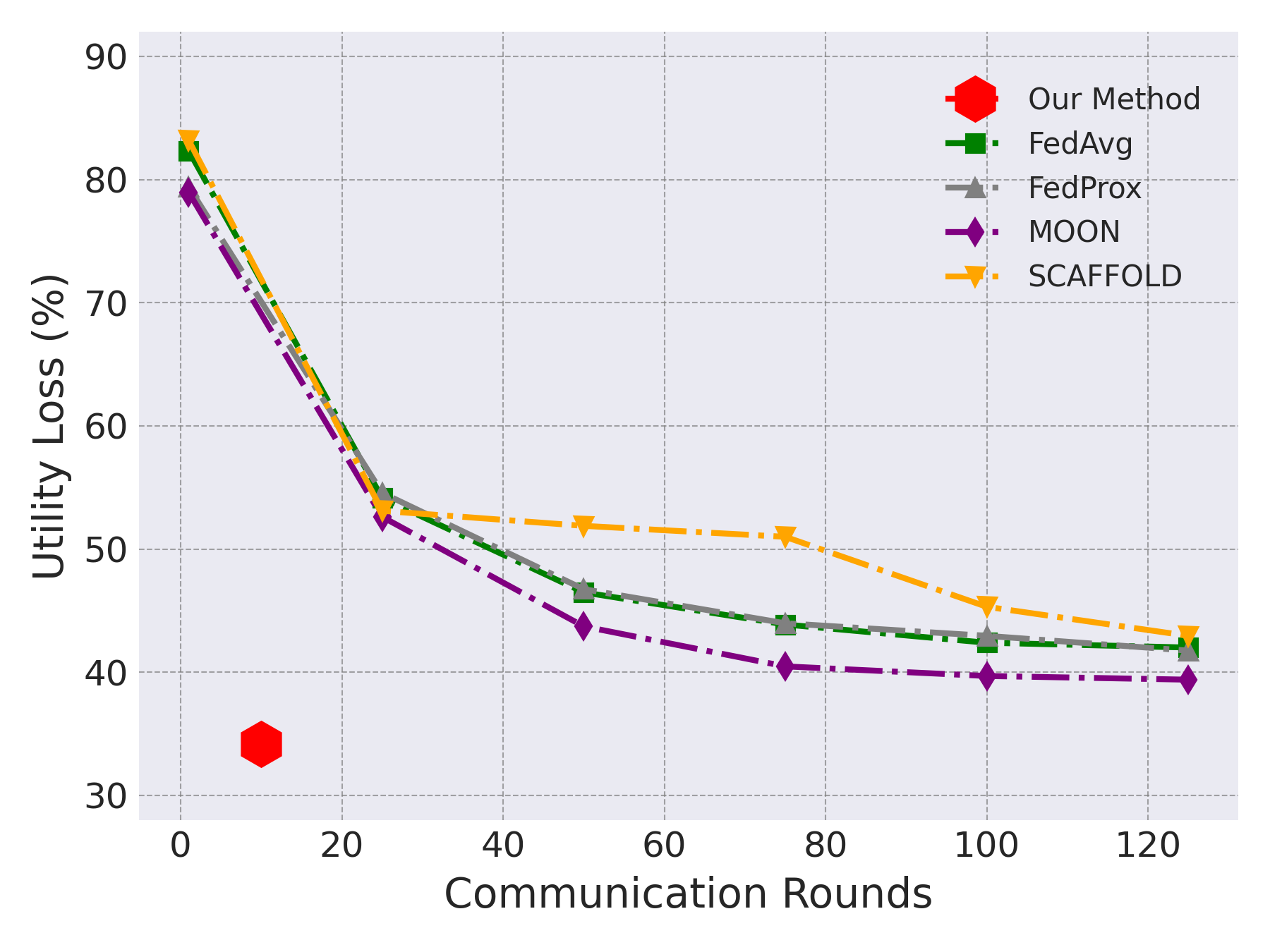}
      \caption{DomainNet}
    \end{subfigure}\hfil
    \begin{subfigure}{0.4\textwidth}
      \includegraphics[width=\linewidth]{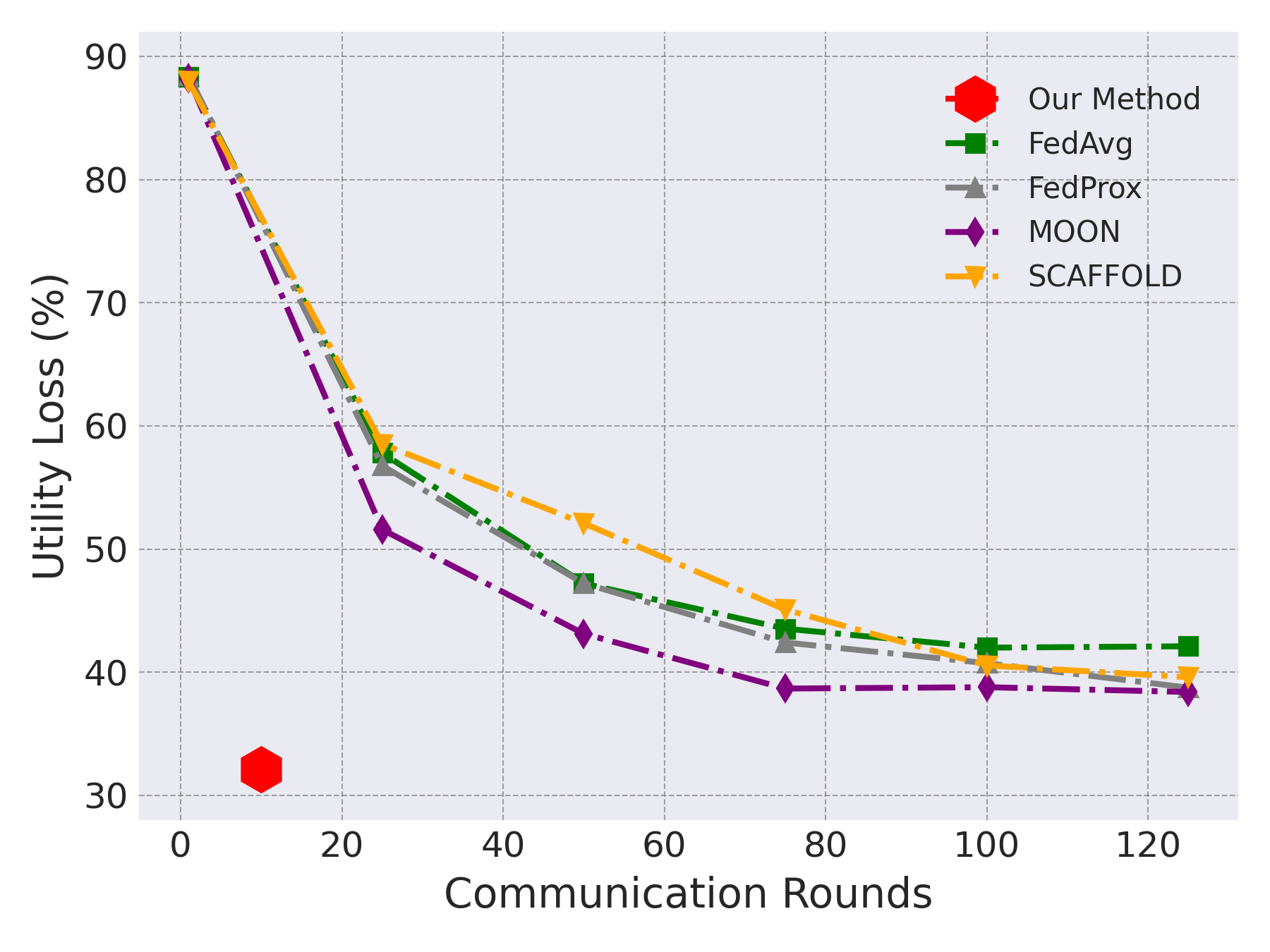}
      \caption{CIFAR100}
    \end{subfigure}\hfil
    
    \caption{Comparison on model accuracy for different methods under different $0.385$-entangled scenario.}
    \label{fig:lagre-xi}
\end{figure}

\begin{figure}[ht]
    \centering
    \begin{subfigure}{0.4\textwidth}
      \includegraphics[width=\linewidth]{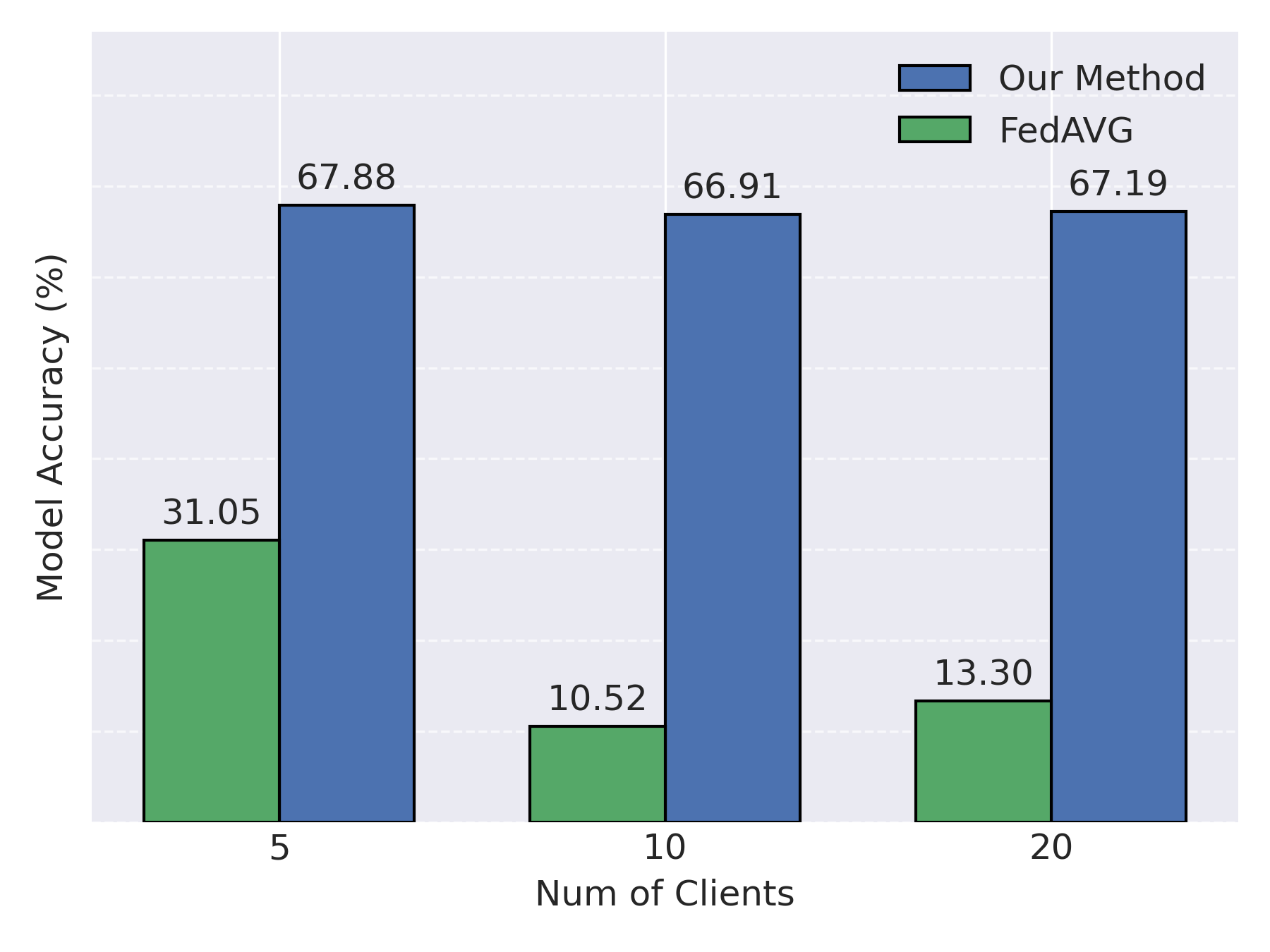}
      \caption{DomainNet}
    \end{subfigure}\hfil
    \caption{Comparison on model accuracy for different number of clients}
    \label{fig:more-clients}
\end{figure}

\section{Theoretical Analysis}

\subsection{Proof for Theorem 1}

Consider $\calD \sim S$ has total $n$ samples $\{X_1, \cdots, X_n\}$.
We first demonstrate the estimated error of the distribution parameter on different data number $n$. In simplify for the analysis, we assume the $S$ is the Truncated normal distribution $\phi(\mu, \sigma, a, b)$ and only need to estimated parameter is $\mu$.
\begin{lem}{Hoeffding inequality.}
    If \(X_1, X_2, \ldots, X_n\) are independent random variables such that \(X_i\) is bounded by  [$a$, \(b\)] (i.e., \(a \leq X_i \leq b\)), and let \(\bar{X} = \frac{1}{n} \sum_{i=1}^{n} X_i\) be the sample mean, then for any \(\epsilon > 0\):

\[
P\left( \bar{X} - \mu \geq \epsilon \right) \leq  \exp\left(-\frac{2n\epsilon^2}{(a-b)^2}\right)
\]
where \(\mu = E[X_i]\) is the expected value of the random variables.

\end{lem}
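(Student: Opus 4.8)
The plan is to prove this tail bound by the exponential Chernoff bounding method, which converts the tail probability into a statement about moment generating functions (MGFs). First I would fix an arbitrary parameter $s > 0$ and apply Markov's inequality to the nonnegative random variable $e^{s(\bar X - \mu)}$, obtaining
\[
P(\bar X - \mu \geq \epsilon) = P\big(e^{s(\bar X - \mu)} \geq e^{s\epsilon}\big) \leq e^{-s\epsilon}\, \mathbb{E}\big[e^{s(\bar X - \mu)}\big].
\]
Writing $\bar X - \mu = \frac{1}{n}\sum_{i=1}^n (X_i - \mu)$ and invoking the independence of the $X_i$, the exponential factorizes, so the MGF splits as a product $\prod_{i=1}^n \mathbb{E}\big[e^{(s/n)(X_i-\mu)}\big]$. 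This reduces the whole problem to controlling a single-variable MGF.

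The crux is the auxiliary single-variable bound (Hoeffding's lemma): if $Y$ is a zero-mean random variable with $a' \le Y \le b'$, then $\mathbb{E}[e^{tY}] \le \exp\big(t^2(b'-a')^2/8\big)$ for every real $t$, where in our case each $Y = X_i - \mu$ lies in an interval of width $b-a$. This is where I expect the main work. The plan is to study the cumulant generating function $\psi(t) = \log \mathbb{E}[e^{tY}]$ and show $\psi(t) \le t^2(b'-a')^2/8$. One checks directly that $\psi(0) = 0$ and $\psi'(0) = \mathbb{E}[Y] = 0$, and that $\psi''(t)$ equals the variance of $Y$ computed under the exponentially tilted law $d\mathbb{Q} \propto e^{tY}\, d\mathbb{P}$. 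Since $Y$ still takes values in $[a',b']$ under this tilted law, the Popoviciu-type inequality $\mathrm{Var}(Y) \le (b'-a')^2/4$ gives $\psi''(t) \le (b'-a')^2/4$ uniformly in $t$; a second-order Taylor expansion of $\psi$ about the origin then yields the claimed single-variable bound.

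Finally I would assemble the pieces. Each factor is bounded by $\exp\big((s/n)^2(b-a)^2/8\big)$, so the product over $i=1,\dots,n$ is at most $\exp\big(s^2(b-a)^2/(8n)\big)$, which gives
\[
P(\bar X - \mu \geq \epsilon) \le \exp\Big(-s\epsilon + \frac{s^2(b-a)^2}{8n}\Big).
\]
Since this holds for every $s > 0$, I would minimize the exponent over $s$; the optimizer $s = 4n\epsilon/(b-a)^2$ produces the exponent value $-2n\epsilon^2/(b-a)^2$, and using $(b-a)^2 = (a-b)^2$ recovers exactly the stated bound. The hardest step is plainly the single-variable MGF estimate (the auxiliary lemma); once it is established, the factorization, substitution, and one-dimensional optimization are routine bookkeeping.
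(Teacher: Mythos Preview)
Your proposal is correct and is exactly the standard Chernoff--Hoeffding argument: Markov on the exponential moment, factorization by independence, Hoeffding's lemma via the tilted variance bound, and optimization in $s$. The paper, however, does not prove this lemma at all---it simply states Hoeffding's inequality as a known result and then invokes it in the subsequent lemmas---so there is no alternative approach to compare against; your write-up supplies the classical proof that the paper omits.
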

According to Hoeffding inequality, the probability of the estimated error for the distribution parameter  $\mu$ is smaller than $\exp\left(-\frac{2n\epsilon^2}{(a-b)^2}\right)$. And this probability tends to zero if the $n$ tends to infinity.

We estimate the distribution mean $\mu$ via $\hat \mu$ according to the $\calD$. Let the $\hat \mu$ be the mean of the estimated distribution $\hat S$. Define the loss function $f(\omega, z)$ on the model $\omega$ and data $z$. Define expectation loss function $F(\omega)$ and $F'(\omega)$ on data $S$ and $\hat S$  as: 
\begin{equation}
    F(\omega) =\EE_{z\in S}f(\omega, z) \quad \text{and} \quad     \hat F(\omega) =\EE_{z\in \hat S}f(\omega, z)
\end{equation}

\begin{lem} \label{lem:app1}
Define $\omega^* = argmin_\omega F(\omega)$ and $\hat \omega^* = argmin_\omega \hat F(\omega)$.    If $f$ is L-lipschitz, then the following holds with the probability $\exp\left(-2n\epsilon^2\right)$:
\begin{equation}
    0 \leq \EE_{z\in S}f(\hat \omega^*, z)-\EE_{z\in S}f(\omega^*, z) \leq 2L\epsilon
\end{equation}
\end{lem}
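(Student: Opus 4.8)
\textbf{Proof proposal for Lemma \ref{lem:app1}.}

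The plan is to bound the utility gap $\EE_{z\in S}f(\hat\omega^*,z)-\EE_{z\in S}f(\omega^*,z)$ by transferring between the two expectation operators $F$ and $\hat F$ using the $L$-Lipschitz property of $f$, and then invoking the Hoeffding bound from the previous lemma to control $|\mu-\hat\mu|$. First I would observe that the left inequality $0\le \EE_{z\in S}f(\hat\omega^*,z)-\EE_{z\in S}f(\omega^*,z)$ is immediate, since $\omega^*$ is by definition the minimizer of $F(\omega)=\EE_{z\in S}f(\omega,z)$, so any other parameter (in particular $\hat\omega^*$) yields a value at least as large.

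For the upper bound I would use a standard optimizer-transfer chain. Write
\begin{align*}
F(\hat\omega^*)-F(\omega^*) &= \bigl(F(\hat\omega^*)-\hat F(\hat\omega^*)\bigr) + \bigl(\hat F(\hat\omega^*)-\hat F(\omega^*)\bigr) + \bigl(\hat F(\omega^*)-F(\omega^*)\bigr).
\end{align*}
The middle term is $\le 0$ because $\hat\omega^*$ minimizes $\hat F$. The first and third terms are each of the form $F(\omega)-\hat F(\omega) = \EE_{z\in S}f(\omega,z)-\EE_{z\in\hat S}f(\omega,z)$ for a fixed $\omega$. Since both $S$ and $\hat S$ are truncated normals on $[a,b]$ differing only in mean ($\mu$ versus $\hat\mu$), I would couple the two distributions — e.g. via the shift coupling $z\mapsto z+(\hat\mu-\mu)$, or by directly estimating the difference of the two densities — so that $|F(\omega)-\hat F(\omega)|\le L\,|\mu-\hat\mu|$ using the $L$-Lipschitz continuity of $f(\omega,\cdot)$. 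This gives $F(\hat\omega^*)-F(\omega^*)\le 2L|\mu-\hat\mu|$. Finally, applying the Hoeffding inequality (Lemma 1) with $\hat\mu=\bar X$, and absorbing the $(a-b)^2$ factor into the constant as the statement implicitly does, the event $|\mu-\hat\mu|\le\epsilon$ holds with probability at least $1-\exp(-2n\epsilon^2)$ (up to the interval-width constant), on which $F(\hat\omega^*)-F(\omega^*)\le 2L\epsilon$.

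The main obstacle I anticipate is making the coupling step $|F(\omega)-\hat F(\omega)|\le L|\mu-\hat\mu|$ fully rigorous: a clean shift coupling works on the untruncated Gaussian, but the truncation to $[a,b]$ means the shifted variable can fall outside the support, so one must either argue the boundary effect is lower order, restrict attention to the regime where $[a,b]$ is wide relative to $|\hat\mu-\mu|$, or bound $\int |p_{\mu}(z)-p_{\hat\mu}(z)|\,dz$ directly (a total-variation / $W_1$ estimate for truncated normals) and combine it with the boundedness of $f$ on $[a,b]$. I would also need to be slightly careful that the two-sided Hoeffding bound (rather than the one-sided version quoted in Lemma 1) is what is really being used, which only changes the constant in the exponent; the paper appears to suppress these constants, so I would follow that convention and flag it in a remark.
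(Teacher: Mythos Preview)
Your proposal is correct and follows essentially the same route as the paper: the paper also establishes the lower bound by optimality of $\omega^*$, uses the optimizer-transfer inequality $\hat F(\hat\omega^*)\le\hat F(\omega^*)$ to reduce to two terms of the form $\EE_{z\in S}f(\omega,z)-\EE_{z\in\hat S}f(\omega,z)$, and then bounds each by $L\epsilon$ via a pure shift coupling $z\mapsto z+\epsilon$ combined with the Lipschitz property, before invoking Hoeffding. Your discussion of the truncation boundary and the one- versus two-sided Hoeffding constant is in fact more careful than the paper's own argument, which simply writes the shift $z_1\mapsto z_1+\epsilon$ without addressing the support issue and silently drops the $(a-b)^2$ factor.
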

\begin{proof}
According to definition of $\omega^*$ and $\hat \omega^*$, we have
\begin{equation}
   \EE_{z\in \hat S}f(\hat \omega^*, z) \leq  \EE_{z\in \hat S}f(\omega^*, z) \quad 
\end{equation}
and 
\begin{equation}
     \EE_{z\in S}f(\omega^*, z) \leq \EE_{z\in S}f(\hat \omega^*, z).
\end{equation}
Therefore, we can obtain
\begin{equation}
    \EE_{z\in S}f(\hat \omega^*, z)-\EE_{z\in S}f(\omega^*, z) \geq 0
\end{equation}
Furthermore, the following holds with the probability $1 -  \exp\left(-2n\epsilon^2\right)$ based on Lemma 1:
\begin{equation}\begin{split}
    &\EE_{z\in S}f(\hat \omega^*, z)-\EE_{z\in S}f(\omega^*, z) \\
    & \leq  [\EE_{z\in S}f(\hat \omega^*, z)-\EE_{z\in \hat S}f(\hat 
 \omega^*, z)] + [\EE_{z\in \hat S}f(\omega^*, z) -\EE_{z\in S}f(\omega^*, z)] \\
 & \leq \int_{z\in S} |f(\hat \omega^*, z_1) - f(\hat \omega^*, z_1+\epsilon)|dp(z) + \int_{z\in S} |f(\omega^*, z_1) - f(\omega^*, z_1+\epsilon)|dp(z) \\
 & \leq L (z_1+\epsilon -z_1) + L (z_1+\epsilon -z_1) \\
 & = 2L\epsilon,
\end{split}
\end{equation}
where $p(z)$ is the probability density function. The last inequality is due to the L-lipschitz of $f$.
\end{proof}

Lemma \ref{lem:app1} demonstrates the utility loss when transferring the estimated distribution. Consider $K$ clients participating in federated learning with their data $\calD_k=\{(x_{k,i}, y_{k,i})\}_{i=1}^{n_k}$. Denote $\calD = \calD_1 \cup \cdots \cup \calD_K$ and $\calD$ follows the distribution $S$.
Then we prove Theorem \ref{thm:thm1-app} according to Lemma as follows:
\begin{theorem} \label{thm:thm1-app}
A data distribution across clients being disentangled is a sufficient condition for the existence of a privacy-preserving federated algorithm that requires only a single communication round and achieves the $\epsilon$ utility loss with the probability $1 -  \exp(-\frac{\min\{n_1, \cdots, n_K\}\epsilon^2}{2L^2})$.
\end{theorem}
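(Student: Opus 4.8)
\textbf{Proof plan for Theorem~\ref{thm:thm1-app}.} The plan is to build the required single-round algorithm explicitly and then bound its utility loss by reducing everything to Lemma~\ref{lem:app1}. The algorithm is exactly the skeleton of FedDistr in the disentangled regime: each client $k$ locally estimates the parameters of its own distribution $S_k = \sum_{i} \pi_{i,k} P_i$, uploads the (privacy-perturbed) parameters in one round, and the server concatenates them. The crucial structural fact to exploit is disentanglement, i.e. $s_{k_1,k_2}=0$ for all $k_1\neq k_2$ (Def.~\ref{def:ep-en}): this forces the supports of the probability vectors $\vpi_{k}$ to be pairwise disjoint, so no base distribution $P_i$ is shared by two clients. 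Consequently the global distribution $S$ decomposes as a disjoint mixture of the $S_k$'s (up to the mixture weights $n_k/n$), and the server can reconstruct $\hat S$ perfectly from the individual $\hat S_k$'s without any cross-client aggregation error — the only error is in each client's local parameter estimate.

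\textbf{Key steps.} First I would fix the simplified model used in the section (truncated normal with unknown mean $\mu$, bounded on $[a,b]$) so that ``estimating the distribution'' means estimating the mean from $n_k$ samples via the sample mean $\hat\mu_k$. Second, apply Hoeffding's inequality (Lemma~1) to each client separately: with its $n_k$ samples, $\Pr(|\hat\mu_k - \mu_k| \geq \epsilon) \leq \exp(-2n_k\epsilon^2/(b-a)^2)$; after a rescaling of $\epsilon$ to absorb the range and the factor of $2$ coming from Lemma~\ref{lem:app1} (this is where the $\epsilon^2/(2L^2)$ form and the constant in the exponent originate), this gives the per-client guarantee. Third, union-bound over the $K$ clients — or more precisely, observe that because the mixture is disjoint, the global estimate is accurate as soon as \emph{every} client's estimate is accurate, so the failure probability is at most $\sum_k \exp(-n_k\epsilon^2/(2L^2)) \leq K\exp(-\min_k n_k \,\epsilon^2/(2L^2))$; the cleaner stated bound $1-\exp(-\min_k n_k\,\epsilon^2/(2L^2))$ follows by folding the factor $K$ into a slightly adjusted constant or by noting that the dominant client controls the bound. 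Fourth, invoke Lemma~\ref{lem:app1} with $\hat S$ in place of the single-distribution estimate to convert the parameter-estimation error $\epsilon$ into a utility-loss bound on $\epsilon_u$, and note that the privacy-preserving perturbation (the clipping-plus-Gaussian-noise step, Eq.~\eqref{eq:embedding_add_noise}) only adds a controlled bias that can be absorbed into $\epsilon$, so the algorithm is genuinely privacy-preserving. Finally, observe $T_c = 1$ by construction.

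\textbf{Main obstacle.} The routine part is the concentration/union-bound bookkeeping; the delicate part is justifying that disentanglement ($s_{k_1,k_2}=0$) really lets the server reconstruct $S$ with \emph{zero} aggregation error, i.e. that knowing each $S_k$ and each dataset size $n_k$ determines $S = \sum_k (n_k/n) S_k$ exactly, with no matching ambiguity across clients. One must argue that since the base distributions $P_i$ appearing at different clients are distinct (disjoint supports of $\vpi_k$), the server's concatenation step loses no information and introduces no error — the only source of error is the within-client estimation handled by Hoeffding. I expect that making this ``disjoint mixture'' reduction fully rigorous, and checking that Lemma~\ref{lem:app1} applies verbatim to the reconstructed mixture $\hat S$ (rather than to a single truncated normal), is where the real content lies; the extension of Lemma~\ref{lem:app1} from one component to a finite disjoint mixture should go through by linearity of expectation over the mixture components but needs to be stated.
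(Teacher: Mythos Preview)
Your plan matches the paper's own proof: use disentanglement so each client estimates its own $S_k$ locally from its $n_k$ samples, invoke Lemma~\ref{lem:app1} per client, then write $S$ as a (weighted) mixture of the $S_k$'s and average the per-client utility bounds to get the global $\epsilon$ bound with success probability governed by $\min_k n_k$. Your treatment is in fact more careful than the paper's --- you flag the union-bound factor $K$, the $n_k/n$ mixture weights, and the need to extend Lemma~\ref{lem:app1} to the reconstructed mixture, all of which the paper silently glosses over --- but the route is the same.
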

\begin{proof}
If the data distribution across clients is disentangled, it means each client $k$ can use their own $n_k$ data to estimate their data distribution. Therefore, according to Lemma \ref{lem:app1}, when achieving utility loss $\epsilon$ for distribution $S_k$ (i.e.,  $\EE_{z\in S_k}f(\hat \omega^*, z)-\EE_{z\in S_k}f(\omega^*, z) \leq \epsilon$, the probability is $\exp(-\frac{n_k\}\epsilon^2}{2L^2(a-b)^2})$.

Then each clients can transfer estimated $\hat S_k$ to the server, thus, 
\begin{equation}
\begin{split}
         &\EE_{z\in S}f(\hat \omega^*, z)-\EE_{z\in S}f(\omega^*, z) \\
         & = \frac{1}{K}\sum_{k=1}^K \EE_{z\in S_k}f(\hat \omega^*, z)-\EE_{z\in S_k}f(\omega^*, z) \\
         & \leq \frac{1}{K} K \epsilon = \epsilon.
\end{split}
\end{equation}
And the probability to achieve $\epsilon$ utility loss is $1 -  \exp(-\frac{\min\{n_1, \cdots, n_K\}\epsilon^2}{2L^2})$.

\end{proof}



\subsection{Analysis on Near-disentangled case}
\begin{lem}
Consider $K$ unit vectors $\vec a_k =(a_{1,k}, \cdots, a_{m,k})$ such that $\sum_{k=1}^K a_{i,k}=1$ for any $i$, if $<\vec a_i, \vec a_j> \leq \xi$ for any $i\neq j$ and $\xi < \frac{1}{(K-1)^2}$, then $\max \{a_{i,1}, \cdots, a_{i,K} \} \geq 1-(K-1)\sqrt{\xi}$
\end{lem}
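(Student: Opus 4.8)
The plan is to peel the hypothesis apart one coordinate at a time, using crucially that the entries $a_{i,k}$ are non-negative (they arise by $\ell_2$-normalizing the probability vectors $\vec\pi_k$, so each coordinate stays $\ge 0$). Fix a base-distribution index $i$, let $k^\star$ be an index at which $a_{i,k}$ is largest, and write $M := a_{i,k^\star} = \max_k a_{i,k}$. Since every summand of an inner product of coordinatewise-non-negative vectors is itself non-negative, a single coordinate product is dominated by the whole inner product, so for every $k \ne k^\star$,
\[
a_{i,k^\star}\,a_{i,k} \;\le\; \langle \vec a_{k^\star},\vec a_k\rangle \;\le\; \xi .
\]

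From here the argument is short. First I would note that $M>0$ (otherwise the $i$-th column is identically zero, contradicting $\sum_{k} a_{i,k}=1$), and combine $a_{i,k}\le M = a_{i,k^\star}$ with the display above and non-negativity to get $a_{i,k}^2 \le a_{i,k}\,a_{i,k^\star} \le \xi$, hence $a_{i,k} \le \sqrt{\xi}$ for each of the $K-1$ indices $k\ne k^\star$. Summing these $K-1$ bounds and invoking the column-sum constraint,
\[
1 - M \;=\; \sum_{k\ne k^\star} a_{i,k} \;\le\; (K-1)\sqrt{\xi},
\]
which rearranges to exactly $\max_k a_{i,k} = M \ge 1-(K-1)\sqrt{\xi}$. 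Finally I would observe that the hypothesis $\xi < \tfrac{1}{(K-1)^2}$ is used only to guarantee that this lower bound is strictly positive, so that the estimate is non-vacuous when it is fed into the probability bound of Theorem~\ref{thm:near} (the inequality itself holds for any $\xi$).

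I do not anticipate a genuine technical obstacle here; the single point that must be made explicit is the non-negativity of the coordinates, which is precisely where the context (normalized probability vectors) enters, and it is really needed — dropping it, one can take in $m=K=3$ the cyclic shifts of $(\tfrac23,\tfrac23,-\tfrac13)$, which are unit vectors with every column sum equal to $1$ and all pairwise inner products equal to $0=\xi$, yet every column has maximum $\tfrac23 < 1 = 1-(K-1)\sqrt{\xi}$. The only other care needed is routine bookkeeping between the index $k$ ranging over the $K$ vectors (clients) and the index $i$ ranging over the $m$ coordinates (base distributions).
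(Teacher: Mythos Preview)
Your proof is correct and follows essentially the same route as the paper: both extract from non-negativity and $\langle \vec a_{k_1},\vec a_{k_2}\rangle\le\xi$ the coordinatewise bound $\min\{a_{i,k_1},a_{i,k_2}\}\le\sqrt{\xi}$, and then combine this with the column-sum constraint $\sum_k a_{i,k}=1$. Your version is in fact a bit cleaner, since you explicitly fix $k^\star=\arg\max_k a_{i,k}$ before summing (the paper tacitly lets the index $1$ play this role), and your observation that non-negativity is essential---together with the cyclic-shift counterexample---is well taken.
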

\begin{proof}
Since $<\vec a_i, \vec a_j> \leq \xi$ for any $i\neq j$,  $\min \{a_{i,k_1}, a_{i,k_2}\} \leq \sqrt{\xi}$ for any $i, k_1\neq k_2$. Therefore, we have
\begin{equation}
\begin{split}
        &\max \{a_{i,1}, \cdots, a_{i,K} \} \geq 1-  \sum_{k\neq 1}\min \{a_{i,1}, a_{i,k}\} \\
        &\geq 1- (K-1)\min \{a_{i,1}, a_{i,k}\} \\
        &\geq 1- (K-1)\sqrt{\xi}
\end{split}
\end{equation}
\end{proof}
For the near-disentangle case (that small $\xi$), we have the following theorem:

\begin{theorem} \label{thm:near-app}
When the distributions of across $K$ clients satisfy near-disentangled condition, specifically, $\xi < \frac{1}{(K-1)^2}$, then there exists a privacy-preserving federated algorithm that requires only one communication rounds and achieves the $\epsilon$ expected loss error with the probability $ \exp(-\frac{(1-(K-1)\sqrt{\xi})n\epsilon^2}{2mL^2})$.
\end{theorem}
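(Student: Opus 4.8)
\textbf{Proof proposal for Theorem \ref{thm:near-app}.}

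The plan is to reduce the near-disentangled case to the disentangled case analyzed in Theorem \ref{thm:thm1-app}, paying an extra price that is controlled by the auxiliary lemma on unit vectors just stated. First I would set up the decomposition: by Eq. \eqref{eq:decouple-distribution} and Eq. \eqref{eq: client-dis}, every client's distribution $S_k = \sum_{i=1}^m \pi_{i,k} P_i$ is a mixture over the $m$ shared base distributions, and the hypothesis $s_{k_1,k_2} \le \xi$ together with $\xi < \frac{1}{(K-1)^2}$ lets us invoke the lemma (applied to the normalized weight vectors $\vec\pi_k$) to conclude that for each base distribution $P_i$ there is a \emph{dominant client} $k(i)$ whose share of $P_i$ satisfies (after the appropriate normalization) $\pi_{i,k(i)} \gtrsim 1-(K-1)\sqrt{\xi}$. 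This is the key structural fact: even though the data are only near-disentangled, each base component is ``owned'' predominantly by one client, so the server can assign responsibility for estimating $P_i$ to client $k(i)$, exactly mirroring the alignment step (Eq. \eqref{eq:KM}) in the FedDistr algorithm.

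Next I would quantify the number of samples available to the dominant client for estimating each $P_i$. If client $k(i)$ has $n_{k(i)}$ total samples and fraction at least $1-(K-1)\sqrt{\xi}$ of them come from $P_i$, then roughly $(1-(K-1)\sqrt{\xi})\,n_{k(i)}$ samples are usable; taking the worst case over $k$ and noting that with $m$ base distributions the relevant per-component count is on the order of $\frac{(1-(K-1)\sqrt{\xi})\,n}{m}$ (where $n$ is the smallest client size, or total size divided appropriately), one feeds this sample count into the Hoeffding bound of Lemma 1 and then into Lemma \ref{lem:app1}. The per-component estimation of $\mu_i$ (for the truncated-normal model) then achieves accuracy $\epsilon$ in the loss with failure probability at most $\exp\!\big(-\frac{(1-(K-1)\sqrt{\xi})\,n\,\epsilon^2}{2mL^2}\big)$, matching the stated bound. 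Summing the per-component utility gaps with weights $\pi_i$ and using $\sum_i \pi_i = 1$ (the same averaging trick as in the proof of Theorem \ref{thm:thm1-app}) keeps the aggregate utility loss at $\epsilon$, and the whole exchange — each client uploads its estimated dominant base parameters once, the server concatenates and redistributes — costs a single communication round, with the DP/norm-clipping step of Eq. \eqref{eq:embedding_add_noise} supplying the privacy-preserving property.

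The main obstacle I anticipate is making the ``dominant client has enough clean samples of $P_i$'' step fully rigorous: the lemma gives a lower bound on the normalized weight $\pi_{i,k(i)}$, but turning a mixture \emph{weight} into a concrete \emph{sample count} requires controlling the multinomial fluctuation of how many of client $k(i)$'s points actually land in the $P_i$ cluster, and also handling the residual contamination from the other $(K-1)\sqrt{\xi}$ fraction of mismatched components that the clustering step (Eq. \eqref{eq:cluster}) may or may not cleanly separate. A secondary subtlety is bookkeeping the union bound over the $m$ base distributions and $K$ clients so that the exponent stays as advertised rather than picking up extra $\log(mK)$ factors; I would absorb these into constants or argue that the dominant exponential term dominates. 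Modulo these points, the argument is a direct ``localize each base distribution to its owner, then apply Theorem \ref{thm:thm1-app} componentwise'' reduction.
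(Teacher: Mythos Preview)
Your proposal is correct and follows essentially the same route as the paper: use the unit-vector lemma to find a dominant client for each base distribution $P_i$, convert the weight lower bound $1-(K-1)\sqrt{\xi}$ into a per-component sample count of order $\frac{(1-(K-1)\sqrt{\xi})n}{m}$, feed that into Lemma~\ref{lem:app1}, and then aggregate over the $m$ components. The obstacles you flag (multinomial fluctuation in the sample count, contamination from the minority components, and the union bound over $m$) are not addressed rigorously in the paper either---it simply \emph{assumes} $\vec\pi_k$ are unit vectors and that each base distribution has exactly $n/m$ points, and takes a minimum rather than a union bound---so your level of rigor already matches or exceeds the paper's.
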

\begin{proof}
In simplify the analysis, we assume $\vec \pi_k$ to be unit vector and  the number of data for each base distribution to be the same ($\frac{n}{m}$), i.e., $\sum_{k=1}^K \pi_{i,k} =1$. According to the Lemma 3, we have 
\begin{equation}
    \max \{a_{i,1}, \cdots, a_{i,K} \} \geq 1-(K-1)\sqrt{\xi}
\end{equation}
which means for each base distribution, there exists one client has at least the $\frac{(1-(K-1)\sqrt{\xi}) n}{m}$ data. Thus, the probability when achieving the utility loss less than $\epsilon$ for any base distribution $P_i, 1\leq i \leq m$ is 
\begin{equation*}
     \exp(-\frac{(1-(K-1)\sqrt{\xi})n\epsilon^2}{2mL^2})
\end{equation*}according to the lemma 2.

Therefore, the probability when achieving the  utility loss on $S$  less than $\epsilon$ is 
\begin{equation*}
     \min_{k\in[K]} \{2 \exp(-\frac{(1-(K-1)\sqrt{\xi})n\epsilon^2}{2mL^2}) \} =  \exp(-\frac{(1-(K-1)\sqrt{\xi})n\epsilon^2}{2mL^2}).
\end{equation*}
\end{proof}

\end{document}